\documentclass[11pt,fleqn]{article}

\usepackage{latexsym}
\usepackage{amssymb}
\usepackage{stmaryrd}
\usepackage{phonetic}
\usepackage{pgf}
\usepackage{tikz}
\usepackage{url}
\usetikzlibrary{arrows}


\newcommand{\be}{\begin{enumerate}}
\newcommand{\ee}{\end{enumerate}}
\newcommand{\bi}{\begin{itemize}}
\newcommand{\ei}{\end{itemize}}
\newcommand{\bc}{\begin{center}}
\newcommand{\ec}{\end{center}}
\newcommand{\bsp}{\begin{sloppypar}}
\newcommand{\esp}{\end{sloppypar}}

\newcommand{\sglsp}{\ }
\newcommand{\dblsp}{\ \ }

\newcommand{\sA}{\mbox{$\cal A$}}

\newcommand{\sE}{\mbox{$\cal E$}}

\newcommand{\sM}{\mbox{$\cal M$}}

\newcommand{\sT}{\mbox{$\cal T$}}

\renewcommand{\phi}{\varphi}

\newcommand{\set}[1]{{\{ #1 \}}}

\newcommand{\mlist}[1]{{[ #1 ]}}

\newcommand{\bsynbrack}[1]{\lceil#1\rceil}

\newcommand{\mname}[1]{\mbox{\sf #1}}
\newcommand{\mcolon}{\mathrel:}
\newcommand{\mdot}{\mathrel.}

\newcommand{\tarrow}{\rightarrow}

\newcommand{\Neg}{\neg}

\newcommand{\ImpliesAlt}{\Rightarrow}

\newcommand{\ForallApp}{\forall\,}

\newcommand{\Undefined}{\bot}

\newcommand{\TRUE}{\mbox{{\sc t}}}
\newcommand{\FALSE}{\mbox{{\sc f}}}

\newcommand{\funapp}{\mathrel@}

\newcommand{\zf}{\mbox{\sc zf}}
\newcommand{\nbg}{\mbox{\sc nbg}}

\newtheorem{thm}{Theorem}[subsection]

\newtheorem{lem}[thm]{Lemma}

\newtheorem{rem}[thm]{Remark}
\newtheorem{eg}[thm]{Example}
\newtheorem{df}[thm]{Definition}

\newenvironment{proof}{\par\noindent{\bf Proof\dblsp}}{\hfill$\Box$}

\newcommand{\Dv}{D_{\rm v}}

\title{{\bf Frameworks for Reasoning about Syntax that Utilize Quotation
  and Evaluation}\thanks{This research was supported by NSERC.}}

\author{William M. Farmer and Pouya Larjani\thanks{Address:
  Department of Computing and Software, McMaster University 1280 Main
  Street West, Hamilton, Ontario L8S 4K1, Canada. E-mail:
  \texttt{wmfarmer@mcmaster.ca}, \texttt{pouya.larjani@gmail.com}.}}

\date{24 June 2014}

\begin{document}

\maketitle

\begin{abstract}
It is often useful, if not necessary, to reason about the syntactic
structure of an expression in an interpreted language (i.e., a
language with a semantics).  This paper introduces a mathematical
structure called a \emph{syntax framework} that is intended to be an
abstract model of a system for reasoning about the syntax of an
interpreted language.  Like many concrete systems for reasoning about
syntax, a syntax framework contains a mapping of expressions in the
interpreted language to syntactic values that represent the syntactic
structures of the expressions; a language for reasoning about the
syntactic values; a mechanism called \emph{quotation} to refer to the
syntactic value of an expression; and a mechanism called
\emph{evaluation} to refer to the value of the expression represented
by a syntactic value.  A syntax framework provides a basis for
integrating reasoning about the syntax of the expressions with
reasoning about what the expressions mean.  The notion of a syntax
framework is used to discuss how quotation and evaluation can be built
into a language and to define what quasiquotation is.  Several
examples of syntax frameworks are presented.
\end{abstract}

\newpage

\section{Introduction} \label{sec:intro}

Every calculus student knows that computing the derivative of a
function directly from the definition is an excruciating task, while
computing the derivative using the rules of differentiation is a
pleasure.  A differentiation rule is a function, but not a usual
function like the square root function or the limit of a sequence
operator.  Instead of mapping a function to its derivative, it maps
one syntactic representation of a function to another.  For example,
the \emph{product rule} maps an expression of the
form \[\frac{d}{dx}(u \cdot v),\] where $u$ and $v$ are expressions
that may include occurrences of $x$, to the
expression \[\frac{d}{dx}(u) \cdot v + u \cdot \frac{d}{dx}(v).\]

We call a mapping, like a differentiation rule, that takes one
syntactic expression to another syntactic expression a
\emph{transformer}~\cite{FarmerMohrenschildt03}.  A full formalization
of calculus requires a reasoning system in which (1) the derivative of
a function can be defined, (2) the differentiation rules can be
represented as transformers, and (3) the transformers representing the
differentiation rules can be shown to compute derivatives.  Such a
reasoning system must provide the means to reason about the syntactic
manipulation of expressions as well as the connection these
manipulations have to the semantics of the expressions.  In other
words, the reasoning system must allow one to reason about syntax and
its relationship to semantics.  See \cite{Farmer13} for a detailed
discussion about the formalization of symbolic differentiation and
other syntax-based mathematical algorithms.

An \emph{interpreted language} is a language $L$ such that each
expression $e$ in $L$ is mapped to a \emph{semantic value} that serves
as the meaning of $e$.  What facilities does a reasoning system need
for reasoning about the interplay of the syntax and semantics of an
interpreted language $L$?  Here are four candidates:

\be

  \item A set of \emph{syntactic values} that represent the syntactic
    structures of the expressions in $L$.

  \item \bsp A language for expressing statements about syntactic
    values and thereby indirectly about the syntactic structures of
    the expressions in $L$.\esp

  \item A mechanism called \emph{quotation} for referring to the
    syntactic value that represents a given expression in $L$.

  \item A mechanism called \emph{evaluation} for referring to the
    semantic value of the expression whose syntactic structure is
    represented by a given syntactic value.

\ee 
Quotation and evaluation together provide the means to integrate
reasoning about the syntax of the expressions with reasoning about
what the expressions mean.

This paper has three objectives.  The first objective is to introduce a
mathematical structure called a \emph{syntax framework} that is
intended to be an abstract model of a system for reasoning about the
syntax of an interpreted language.  A syntax framework for an
interpreted language $L$ contains four components corresponding to the
four facilities mentioned just above:

\be

  \item A function called a \emph{syntax representation} that maps
    each expression $e$ in $L$ to a \emph{syntactic value} that
    represents the syntactic structure of $e$.

  \item A language called a \emph{syntax language} whose expressions
    denote syntactic values.  

  \item A \emph{quotation} function that maps an expression $e$ in $L$
    to an expression in the syntax language that denotes the syntactic
    value of $e$.

  \item An \emph{evaluation} function that maps an expression $e$ in
    the syntax language to an expression in $L$ whose semantic value
    is the same as that of the expression in $L$ whose syntactic value
    is denoted by $e$.

\ee

The second objective is to demonstrate that a syntax framework has the
ingredients needed for reasoning effectively about syntax.  We discuss
the benefits of a syntax framework for reasoning about syntax and
particularly for reasoning about transformers like the differentiation
rules.  We explain how the liar paradox can be avoided when quotation
and evaluation are built-in operators.  And we define in a syntax
framework a notion of quasiquotation which greatly facilitates
constructing expressions that denote syntactic values.

The third objective is to show that the notion of a syntax framework
embodies a common structure that is found in a variety of systems for
reasoning about the interplay of syntax and semantics.  In particular,
we show that the standard systems in which syntactic structure is
represented by strings, G\"odel numbers, and members of an inductive
type are instances of a syntax framework.  We also show that several
more sophisticated systems from the literature, including a simplified
version of Lisp, can be viewed as syntax frameworks.

\emph{Reflection} is a technique to embed reasoning about a reasoning
system (i.e., metareasoning) in the reasoning system itself.
Reflection has been employed in logic~\cite{Koellner09}, theorem
proving~\cite{Harrison95}, and programming~\cite{DemersMalenfant95}.
Since metareasoning very often involves the syntactic manipulation of
expressions, a syntax framework is a natural subcomponent of a
reflection mechanism.

The rest of the paper is organized as follow.  The next section,
section~\ref{sec:syn-frame}, defines the notion of a syntax framework
and discusses it benefits.  Section~\ref{sec:examples} presents three
standard syntax reasoning systems that are instances of a syntax
framework.  Section~\ref{sec:built-in} discusses built-in operators
for quotation and evaluation as found in Lisp and other languages and
explains how the liar paradox is avoided in a syntax framework.
Section~\ref{sec:quasiquotation} defines a notion of quasiquotation in
a syntax framework.  Section~\ref{sec:literature} identifies some
sophisticated syntax reasoning systems in the literature that are
instances of a syntax framework.  The paper ends with a conclusion in
section~\ref{sec:conclusion}.

\section{Syntax Frameworks} \label{sec:syn-frame}

In this section we will define a mathematical structure called a
\emph{syntax framework}.  In the subsequent sections we will give
several examples of syntax reasoning systems that can be interpreted
as instances of this structure.

The reader should note that the notion of a syntax framework presented
here is not adequate to interpret syntax reasoning systems, such as
programming languages, that contain context-sensitive expressions
(such as mutable variables).  To interpret these kinds of systems, a
syntax framework must be extended to a \emph{contextual syntax
  framework} that includes mutable contexts.  For further discussion,
see Remark~\ref{rem:contextual}.

\subsection{Interpreted Languages} \label{subsec:inter-lang}

Let a \emph{formal language} be a set of expressions each having a
unique mathematically precise syntactic structure.  We will leave
``expression'' and ``mathematically precise syntactic structure''
unspecified.  A formal language $L$ is a \emph{sublanguage} of a
formal language $L'$ if $L \subseteq L'$.

An interpreted language is a formal language with a semantics:

\begin{df}[Interpreted Language] \label{df:interp-lang} \em \bsp
An \emph{interpreted language} is a triple $I=(L,D_{\rm sem},V_{\rm
  sem})$ where: 

\be

  \item $L$ is a formal language.

  \item $D_{\rm sem}$ is a nonempty domain (set) of \emph{semantic
    values}.

  \item $V_{\rm sem} : L \tarrow D_{\rm sem}$ is a total function,
    called a \emph{semantic valuation function}, that assigns each
    expression $e \in L$ a semantic value $V_{\rm sem}(e) \in D_{\rm
      sem}$. \hfill $\Box$

\ee 
\esp
\end{df}
An interpreted language is thus a formal language with an associated
assignment of a semantic meaning to each expression in the language.
Each expression of an interpreted language thus has both a syntactic
structure and a semantic meaning.  There is no restriction placed on
what can be a semantic value.  An interpreted language is graphically
depicted in Figure~\ref{fig:interp-lang} (we will add elements to this
figure as the discussion advances).

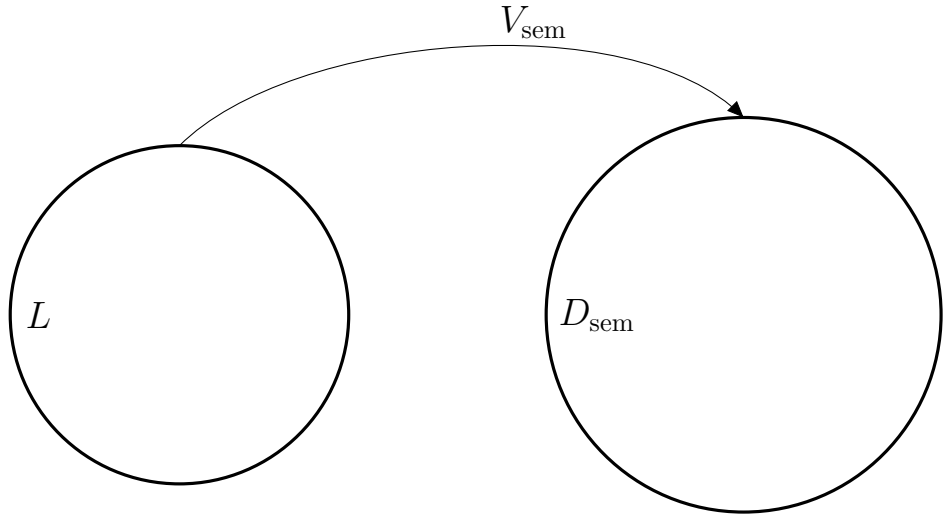
\begin{figure}
\center
\begin{tikzpicture}[scale=.75]
  \draw[very thick] (-3,0) circle (3);
    \draw (-5.5,0) node {\Large $L$};
  \draw[very thick] (7,0) circle (3.5);
    \draw (4.4,0) node {\Large $D_{\rm sem}$};
  \draw[-triangle 45] (-3,3) .. controls (-1,5) and (5,5.5) .. (7,3.5);
    \draw[right] (2.5,5.2) node {\Large $V_{\rm sem}$};
\end{tikzpicture}
\caption{An Interpreted Language}  \label{fig:interp-lang}
\end{figure}

\begin{eg}[Many-Sorted First-Order Languages] \label{eg:ms-fol} \em
\bsp Let $L$ be the set of the terms and formulas of a many-sorted
first-order language with sorts $\alpha_1,\ldots,\alpha_n$.  Define
$L_i$ to be the set of terms of sort $\alpha_i$ for each $i$ with $1
\le i \le n$ and $L_{\rm f}$ to be the set of formulas of the
many-sorted first-order language. \esp

Let $(D_1,\ldots,D_n,I)$ be a model for the many-sorted first-order
language $L$ where each $D_i$ is a nonempty domain and $I$ is an
interpretation function for the individual constants, function
symbols, and predicate symbols of $L$.  Let $\phi_i$ be a mapping from
the variables in $L_i$ to $D_i$ for each $i$ with $1 \le i \le n$.
The model $(D_1,\ldots,D_n,I)$ and variable assignments
$\phi_1,\ldots,\phi_n$ determine a semantic valuation function $V_i :
L_i \tarrow D_i$ on terms of sort $\alpha_i$ for each $i$ with $1 \le
i \le n$ and a semantic valuation function $V_{\rm f} : L_{\rm f}
\tarrow \set{\TRUE,\FALSE}$ on formulas.  Then \[(L,D_1 \cup \cdots
\cup D_n \cup \set{\TRUE,\FALSE}, V_1 \cup \cdots \cup V_n \cup V_{\rm
  f})\] is an interpreted language. \hfill $\Box$
\end{eg}

\subsection{Syntax Representations and Syntax Languages}

A syntax representation of a formal language is an assignment of
syntactic values to the expressions of the language:

\begin{df}[Syntax Representation] \label{df:syn-rep} \em \bsp
Let $L$ be a formal language. A \emph{syntax representation} of $L$ is
a pair $R=(D_{\rm syn},V_{\rm syn})$ where:

\be

  \item $D_{\rm syn}$ is a nonempty domain (set) of \emph{syntactic
    values}.  Each member of $D_{\rm syn}$ represents a syntactic
    structure.

  \item $V_{\rm syn} : L \tarrow D_{\rm syn}$ is an injective, total
    function, called a \emph{syntactic valuation function}, that
    assigns each expression $e \in L$ a syntactic value $V_{\rm
      syn}(e) \in D_{\rm syn}$ such that $V_{\rm syn}(e)$ represents
    the syntactic structure of $e$. \hfill $\Box$

\ee 
\esp
\end{df}
A syntax representation of a formal language is thus an assignment of
a syntactic meaning to each expression in the language.  Notice that,
if $R=(D_{\rm syn},V_{\rm syn})$ is a syntax representation of $L$,
then $(L,D_{\rm syn},V_{\rm syn})$ is an interpreted language.

\begin{eg}[Expressions as Strings: Syntax Representation] \label{eg:strings-a} \em \bsp
Let $L$ be a many-sorted first-order language.  The expressions of $L$
--- i.e.,~the terms and formulas of $L$ --- can be viewed as certain
strings of symbols.  For example, the term $f(x)$ can be viewed as the
string \texttt{"f(x)"} composed of four symbols.  Let $\sA$ be the
alphabet of symbols occurring in the expressions of $L$ and
$\mname{strings}_{\cal A}$ be the set of strings over $\sA$.  Then the
syntactic structure of an expression can be represented by a string in
$\mname{strings}_{\cal A}$, and we can define a function $S : L
\tarrow \mname{strings}_{\cal A}$ that maps each expression of $L$ to
the string over $\sA$ that represents its syntactic structure.  $S$ is
an injective, total function since, for each $e \in L$, there is
exactly one string in $\mname{strings}_{\cal A}$ that represents the
syntactic structure of $e$.  Therefore, $(\mname{strings}_{\cal A},
S)$ is a syntax representation of $L$. \hfill $\Box$ \esp
\end{eg}

A syntax language for a syntax representation is a language of
expressions that denote syntactic values in the syntax representation:

\begin{df}[Syntax Language] \label{df:syn-lang} \em \bsp
Let $R=(D_{\rm syn},V_{\rm syn})$ be a syntax representation of a
formal language $L_{\rm obj}$.  A \emph{syntax language} for $R$ is a pair
$(L_{\rm syn}, I)$ where:

\be

  \item $I = (L,D_{\rm sem},V_{\rm sem})$ in an interpreted language.

  \item $L_{\rm obj}\subseteq L$, $L_{\rm syn} \subseteq L$, and
    $D_{\rm syn} \subseteq D_{\rm sem}$.

  \item $V_{\rm sem}$ restricted to $L_{\rm syn}$ is a total function
    $V'_{\rm sem} : L_{\rm syn} \tarrow D_{\rm syn}$. \hfill $\Box$

\ee
\esp
\end{df}
Notice that, if $(L_{\rm syn}, I)$ is a syntax language for $R$ (as
in the definition above), then $(L_{\rm syn}, D_{\rm syn}, V'_{\rm sem})$ is an
interpreted language.

\begin{eg}[Expressions as Strings: Syntax Language] \label{eg:strings-b} \em
\bsp Let $I= (L, D, V)$ where \[D = D_1 \cup \cdots \cup D_n \cup
\set{\TRUE,\FALSE}\] and \[V = V_1 \cup \cdots \cup V_n \cup V_{\rm
  f}\] be the interpreted language given in Example~\ref{eg:ms-fol}.
Recall that $L$ is the set of terms and formulas of a many-sorted
first-order language with sorts $\alpha_1,\ldots,\alpha_n$.  Suppose
$\alpha_1 = \mname{Symbol}$, $\alpha_2 = \mname{String}$, $D_1$ is the
alphabet of $L$, and $D_2$ is the set of strings over $D_1$.  Let $S :
L \tarrow D_2$ be the total function that maps each $e \in L$ to the
string in $D_2$ that represents the syntactic structure of $e$.  Then
$R= (D_2,S)$ is a syntax representation of $L$ as in
Example~\ref{eg:strings-a} and $(L_2,I)$ is a syntax language for $R$
since $L_2 \subseteq L$, $D_2 \subseteq D$, and $V$ restricted to
$L_2$ is $V_2 : L_2 \tarrow D_2$. \hfill $\Box$ \esp
\end{eg}

\subsection{Definition of a Syntax Framework} \label{subsec:frameworks}

A syntax framework is a structure that is built from an interpreted
language $I = (L,D_{\rm sem},V_{\rm sem})$ in three stages.  

The first stage is to choose an object language $L_{\rm obj} \subseteq
L$ and a syntax representation $R=(D_{\rm syn},V_{\rm syn})$ for
$L_{\rm obj}$ such that $D_{\rm syn} \subseteq D_{\rm sem}$.  ($L_{\rm
  obj}$ could be the entire language $L$ as in
Example~\ref{eg:strings-a}.)  This first stage is depicted in
Figure~\ref{fig:syn-frame-stage-1}.

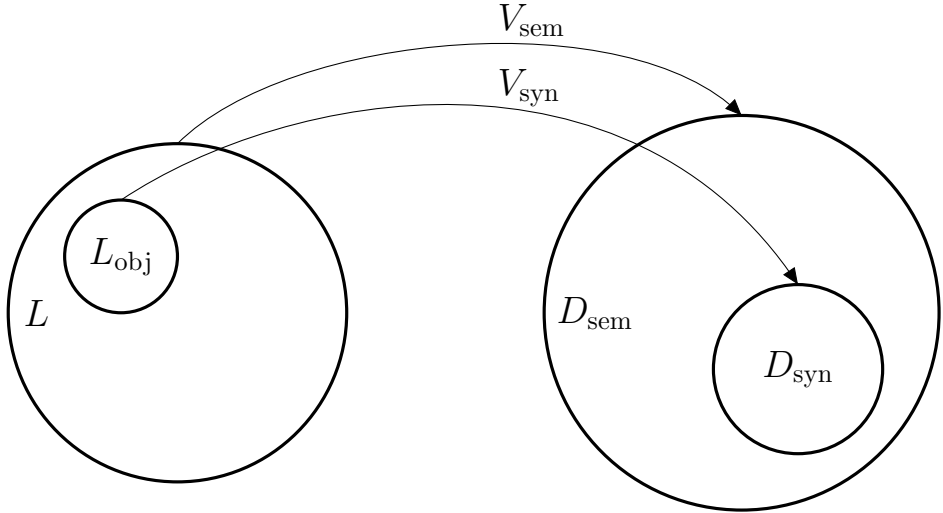
\begin{figure}
\center
\begin{tikzpicture}[scale=.75]
  \draw[very thick] (-3,0) circle (3);
    \draw (-5.5,0) node {\Large $L$};
  \draw[very thick] (-4,1) circle (1);
    \draw (-4,1) node {\Large $L_{\rm obj}$};
  \draw[very thick] (7,0) circle (3.5);
    \draw (4.4,0) node {\Large $D_{\rm sem}$};
  \draw[very thick] (8,-1) circle (1.5);
    \draw (8,-1) node {\Large $D_{\rm syn}$};
  \draw[-triangle 45] (-3,3) .. controls (-1,5) and (5,5.5) .. (7,3.5);
    \draw[right] (2.5,5.2) node {\Large $V_{\rm sem}$};
  \draw[-triangle 45] (-4,2) .. controls (-1,4) and (5,5) .. (8,.5);
    \draw[right] (2.5,4) node {\Large $V_{\rm syn}$};
\end{tikzpicture}
\caption{Stage 1 of a Syntax Framework}  \label{fig:syn-frame-stage-1}
\end{figure}

The second stage is to choose a language $L_{\rm syn} \subseteq L$
such that $(L_{\rm syn},I)$ is a syntax language for $R$.  This second
stage, depicted in Figure~\ref{fig:syn-frame-stage-2}, establishes
$L_{\rm syn}$ as a language that can be used to make statements in $L$
about the syntax of the object language $L_{\rm obj}$ via the syntax
representation established in the stage 1.  ($V'_{\rm sem}$ is $V_{\rm
  sem}$ restricted to $L_{\rm syn}$.)

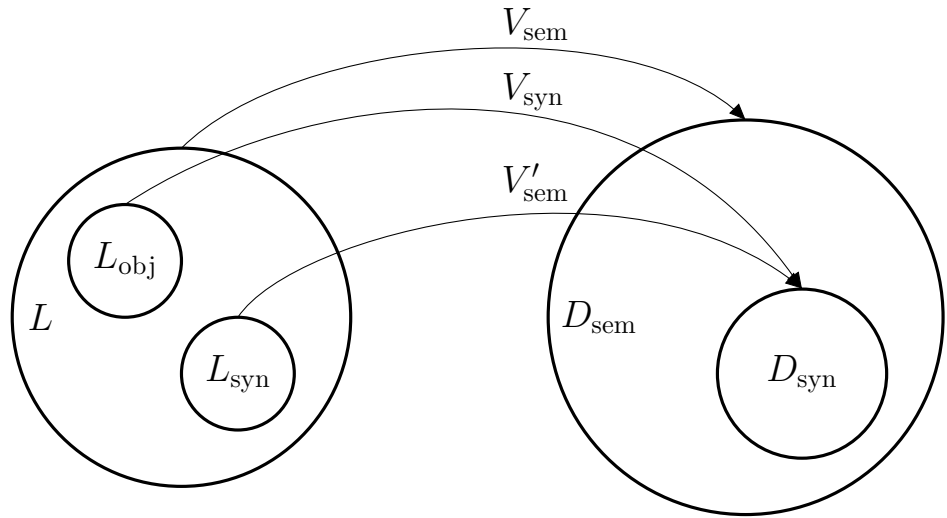
\begin{figure}
\center
\begin{tikzpicture}[scale=.75]
  \draw[very thick] (-3,0) circle (3);
    \draw (-5.5,0) node {\Large $L$};
  \draw[very thick] (-4,1) circle (1);
    \draw (-4,1) node {\Large $L_{\rm obj}$};
  \draw[very thick] (-2,-1) circle (1);
    \draw (-2,-1) node {\Large $L_{\rm syn}$};
  \draw[very thick] (7,0) circle (3.5);
    \draw (4.4,0) node {\Large $D_{\rm sem}$};
  \draw[very thick] (8,-1) circle (1.5);
    \draw (8,-1) node {\Large $D_{\rm syn}$};
  \draw[-triangle 45] (-3,3) .. controls (-1,5) and (5,5.5) .. (7,3.5);
    \draw[right] (2.5,5.2) node {\Large $V_{\rm sem}$};
  \draw[-triangle 45] (-4,2) .. controls (-1,4) and (5,5) .. (8,.5);
    \draw[right] (2.5,4) node {\Large $V_{\rm syn}$};
  \draw[-triangle 45] (-2,0) .. controls (-1,1.5) and (5,3) .. (8,.51);
    \draw[right] (2.5,2.4) node {\Large $V'_{\rm sem}$};
\end{tikzpicture}
\caption{Stage 2 of a Syntax Framework}  \label{fig:syn-frame-stage-2}
\end{figure}

The third and final stage is to link $L_{\rm obj}$ and $L_{\rm syn}$
using mappings $Q : L_{\rm obj} \tarrow L_{\rm syn}$ and $E : L_{\rm
  syn} \tarrow L_{\rm obj}$ as depicted in Figure~\ref{fig:syn-frame}.
$Q$ is an injective, total function such that, for all $e \in L_{\rm
  obj}$, \[V_{\rm sem}(Q(e)) = V_{\rm syn}(e).\] For $e \in L_{\rm
  obj}$, $Q(e)$ is called the \emph{quotation} of $e$.  $Q(e)$ denotes
a value in $D_{\rm syn}$ that represents the syntactic structure of
$e$.  $E$ is a (possibly partial) function such that, for all $e \in
L_{\rm syn}$, \[V_{\rm sem}(E(e)) = V_{\rm sem}(V_{\rm
  syn}^{-1}(V_{\rm sem}(e)))\] whenever $E(e)$ is defined.  For $e \in
L_{\rm syn}$, $E(e)$ is called the \emph{evaluation} of $e$.  If it is
defined, $E(e)$ denotes the same value in $D_{\rm sem}$ that the
expression represented by the value of $e$ denotes.  Notice that the
equation above implies $E(e)$ is undefined if $V_{\rm sem}(e)$ is not
in the image of $L_{\rm obj}$ under $V_{\rm syn}$.  Since there will
usually be different $e_1,e_2 \in L_{\rm syn}$ that denote the same
syntactic value, $E$ will usually not be injective.

\begin{figure}
\center
\begin{tikzpicture}[scale=.75]
  \draw[very thick] (-3,0) circle (3);
    \draw (-5.5,0) node {\Large $L$};
  \draw[very thick] (-4,1) circle (1);
    \draw (-4,1) node {\Large $L_{\rm obj}$};
  \draw[very thick] (-2,-1) circle (1);
    \draw (-2,-1) node {\Large $L_{\rm syn}$};
  \draw[very thick] (7,0) circle (3.5);
    \draw (4.4,0) node {\Large $D_{\rm sem}$};
  \draw[very thick] (8,-1) circle (1.5);
    \draw (8,-1) node {\Large $D_{\rm syn}$};
  \draw[-triangle 45] (-3,3) .. controls (-1,5) and (5,5.5) .. (7,3.5);
    \draw[right] (2.5,5.2) node {\Large $V_{\rm sem}$};
  \draw[-triangle 45] (-4,2) .. controls (-1,4) and (5,5) .. (8,.5);
    \draw[right] (2.5,4) node {\Large $V_{\rm syn}$};
  \draw[-triangle 45] (-2,0) .. controls (-1,1.5) and (5,3) .. (8,.51);
    \draw[right] (2.5,2.4) node {\Large $V'_{\rm sem}$};
  \draw[-triangle 45] (-3,1) -- (-2,0);
    \draw[right] (-2.6,.8) node {\Large $Q$};
  \draw[-triangle 45] (-3,-1) -- (-4,0);
    \draw[right] (-4.3,-.9) node {\Large $E$};
\end{tikzpicture}
\caption{A Syntax Framework}  \label{fig:syn-frame}
\end{figure}
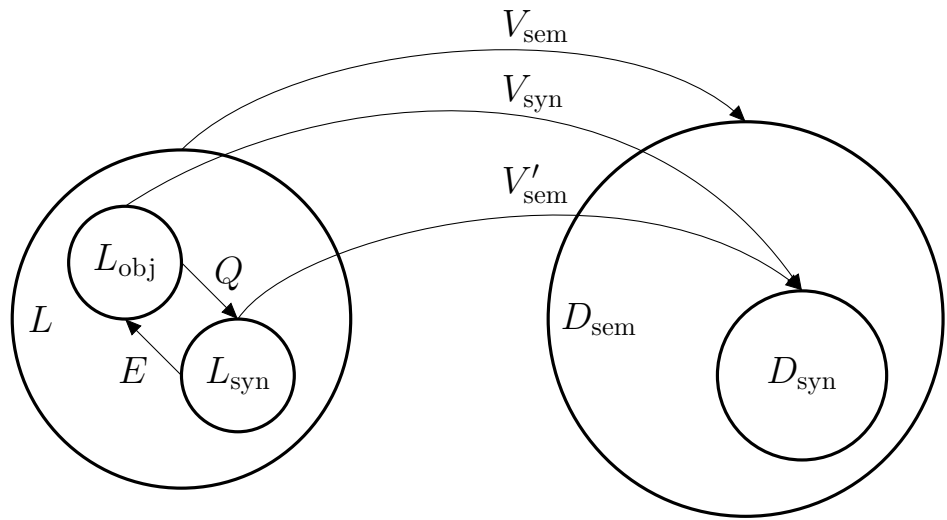

The full definition of a syntax framework is obtained when we put these
three stages together:

\begin{df}[Syntax Framework in an Interpreted Language]\label{df:syn-frame-lang}\em
\bsp
Let $I=(L,D_{\rm sem},V_{\rm sem})$ be an interpreted language
and $L_{\rm obj}$ be a sublanguage of $L$.  A \emph{syntax framework}
for $(L_{\rm obj},I)$ is a tuple $F=(D_{\rm syn},V_{\rm syn}, L_{\rm
  syn}, Q, E)$ where:\esp

\be

  \item $R = (D_{\rm syn},V_{\rm syn})$ is a syntax representation of
    $L_{\rm obj}$.

  \item $(L_{\rm syn},I)$ is syntax language for $R$.

  \item $Q : L_{\rm obj} \tarrow L_{\rm syn}$ is an injective, total
    function, called a \emph{quotation function}, such that:

    \textbf{Quotation Axiom.} For all $e \in L_{\rm obj}$, \[V_{\rm
      sem}(Q(e)) = V_{\rm syn}(e).\]

  \item $E : L_{\rm syn} \tarrow L_{\rm obj}$ is a (possibly partial)
    function, called an \emph{evaluation function}, such that:

    \textbf{Evaluation Axiom.} For all $e \in L_{\rm syn}$, \[V_{\rm
      sem}(E(e)) = V_{\rm sem}(V_{\rm syn}^{-1}(V_{\rm sem}(e)))\]
    whenever $E(e)$ is defined. \hfill $\Box$

\ee 
\end{df}
\bsp \noindent $L$ is called the \emph{full language} of the $F$.
When $D_{\rm sem}$ and $V_{\rm sem}$ are understood, we will say that
$F$ is a syntax framework for $L_{\rm obj}$ in $L$.  Notice that a
syntax framework contains three interpreted languages: $(L,D_{\rm
  sem},V_{\rm sem})$, $(L_{\rm obj},D_{\rm syn},V_{\rm syn})$, and
$(L_{\rm syn}, D_{\rm syn}, V'_{\rm sem})$.  Notice also that the
functions $Q$ and $E$ are part of the metalanguage of $L$ and the
expressions of the form $Q(e)$ and $E(e)$ are not necessarily
expressions of $L$.  In section~\ref{sec:built-in} we will discuss
syntax frameworks in which quotations and evaluations are expressions
in $L$ itself. \esp

\subsection{Two Basic Lemmas}

\bsp Let $I=(L,D_{\rm sem},V_{\rm sem})$ be an interpreted language,
$L_{\rm obj}$ be a sublanguage of $L$, and $F=(D_{\rm syn}, V_{\rm
  syn}, L_{\rm syn}, Q, E)$ be a syntax framework for $(L_{\rm
  obj},I)$. \esp

\begin{lem}[Law of Disquotation] \label{lem:disquotation}
For all $e \in L_{\rm obj}$, \[V_{\rm sem}(E(Q(e))) = V_{\rm sem}(e)\]
whenever $E(Q(e))$ is defined.
\end{lem}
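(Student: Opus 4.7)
The plan is to apply the two axioms of a syntax framework in sequence and then cancel $V_{\rm syn}^{-1} \circ V_{\rm syn}$ using the injectivity of the syntactic valuation function. Fix $e \in L_{\rm obj}$ and assume $E(Q(e))$ is defined. Since $Q : L_{\rm obj} \tarrow L_{\rm syn}$ is total, $Q(e)$ is a well-defined element of $L_{\rm syn}$, so the Evaluation Axiom applies to it.

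First, I would instantiate the Evaluation Axiom at the expression $Q(e) \in L_{\rm syn}$ to obtain
\[
V_{\rm sem}(E(Q(e))) = V_{\rm sem}\bigl(V_{\rm syn}^{-1}(V_{\rm sem}(Q(e)))\bigr).
\]
Next, I would instantiate the Quotation Axiom at $e \in L_{\rm obj}$, which gives $V_{\rm sem}(Q(e)) = V_{\rm syn}(e)$, and substitute this into the right-hand side above. This yields
\[
V_{\rm sem}(E(Q(e))) = V_{\rm sem}\bigl(V_{\rm syn}^{-1}(V_{\rm syn}(e))\bigr).
\]
Finally, I would invoke the injectivity of $V_{\rm syn}$ (part of Definition~\ref{df:syn-rep}), which guarantees that $V_{\rm syn}^{-1}(V_{\rm syn}(e)) = e$, to conclude that $V_{\rm sem}(E(Q(e))) = V_{\rm sem}(e)$.

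There is no real obstacle here: the lemma is essentially a one-line chain of equalities, and the only subtle point worth flagging is the well-definedness of $V_{\rm syn}^{-1}$ at $V_{\rm syn}(e)$, which is exactly what the injectivity clause of the syntax representation guarantees. The assumption that $E(Q(e))$ is defined is used only to ensure that the left-hand side of the Evaluation Axiom makes sense; it plays no further role in the derivation.
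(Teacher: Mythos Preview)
Your proof is correct and follows essentially the same three-step chain as the paper: apply the Evaluation Axiom to $Q(e)$, substitute using the Quotation Axiom, and then cancel $V_{\rm syn}^{-1} \circ V_{\rm syn}$. The only cosmetic difference is that the paper justifies the final cancellation by appealing to the totality of $V_{\rm syn}$ on $L_{\rm obj}$, whereas you (more precisely) cite injectivity; both properties are part of Definition~\ref{df:syn-rep}, and together they make the cancellation immediate.
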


\begin{proof}
Let $e \in L_{\rm obj}$ such that $E(Q(e))$ is defined.  Then
\setcounter{equation}{0}
\begin{eqnarray}
V_{\rm sem}(E(Q(e))) & = & V_{\rm sem}(V_{\rm syn}^{-1}(V_{\rm sem}(Q(e)))) \\
                    & = & V_{\rm sem}(V_{\rm syn}^{-1}(V_{\rm syn}(e))) \\
                    & = & V_{\rm sem}(e)
\end{eqnarray}
(1) follows from the Evaluation Axiom since $E(Q(e))$ is defined. (2)
follows from the Quotation Axiom.  And (3) is by the fact that $V_{\rm
  syn}(e)$ is total on $L_{\rm obj}$.
\end{proof}

\bigskip

The Law of Disquotation does not hold universally in general because
$E$ may not be total on quotations.

\begin{df}[Direct Evaluation] \label{df:direct-eval} \em \bsp
Let $E^{\ast} : L_{\rm syn} \tarrow L_{\rm obj}$ to be the (possibly
partial) function such that, for all $e \in L_{\rm syn}$, $E^{\ast}(e)
= V_{\rm syn}^{-1}(V_{\rm sem}(e))$ whenever $V_{\rm syn}^{-1}(V_{\rm
  sem}(e))$ is defined.  $E^{\ast}$ is called the \emph{direct
  evaluation function for $F$}. \hfill $\Box$ \esp
\end{df}

\begin{lem}[Direct Evaluation] \label{lem:direct-eval}
\be

  \item[]

  \item $E^{\ast}$ satisfies the Evaluation Axiom.

  \item For all $e \in L_{\rm syn}$, if $E^{\ast}(e)$ and $E(e)$ are
    defined, then \[V_{\rm sem}(E^{\ast}(e)) = V_{\rm sem}(E(e)).\]

  \item If $V_{\rm syn}$ is surjective, then $E^{\ast}$ is total.

\ee
\end{lem}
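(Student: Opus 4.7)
The plan is to attack the three parts in order, relying only on the definition of $E^{\ast}$, the two axioms that define a syntax framework, and the observation (noted just after Definition~\ref{df:syn-lang}) that $V'_{\rm sem}$ maps $L_{\rm syn}$ into $D_{\rm syn}$.

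For part (1) I would simply unfold definitions. Fix $e \in L_{\rm syn}$ and suppose $E^{\ast}(e)$ is defined, which by Definition~\ref{df:direct-eval} means that $V_{\rm syn}^{-1}(V_{\rm sem}(e))$ is defined and equals $E^{\ast}(e)$. Applying $V_{\rm sem}$ to both sides gives $V_{\rm sem}(E^{\ast}(e)) = V_{\rm sem}(V_{\rm syn}^{-1}(V_{\rm sem}(e)))$, which is precisely the Evaluation Axiom with $E^{\ast}$ in place of $E$.

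For part (2) I would chain two instances of the Evaluation Axiom. Suppose $e \in L_{\rm syn}$ with both $E^{\ast}(e)$ and $E(e)$ defined. Applying the Evaluation Axiom to $E$ yields $V_{\rm sem}(E(e)) = V_{\rm sem}(V_{\rm syn}^{-1}(V_{\rm sem}(e)))$, and by part (1) the same right-hand side equals $V_{\rm sem}(E^{\ast}(e))$, so the two values coincide.

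For part (3), the key observation is that $V_{\rm sem}(e) \in D_{\rm syn}$ for every $e \in L_{\rm syn}$, since the restriction $V'_{\rm sem} : L_{\rm syn} \tarrow D_{\rm syn}$ supplied by the syntax language is total. If $V_{\rm syn}$ is surjective onto $D_{\rm syn}$, then together with the injectivity of $V_{\rm syn}$ from Definition~\ref{df:syn-rep}, $V_{\rm syn}$ is a bijection between $L_{\rm obj}$ and $D_{\rm syn}$, so $V_{\rm syn}^{-1}(V_{\rm sem}(e))$ is always defined and hence so is $E^{\ast}(e)$. None of the three parts presents a genuine obstacle; the only mild care needed is bookkeeping about when the partial functions $V_{\rm syn}^{-1}$, $E^{\ast}$, and $E$ are defined, which is why parts (2) and (3) explicitly condition on definedness and on surjectivity of $V_{\rm syn}$ respectively.
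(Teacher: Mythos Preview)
Your proposal is correct and follows essentially the same approach as the paper's proof: unfold the definition of $E^{\ast}$ for part~(1), combine the Evaluation Axiom for $E$ with part~(1) for part~(2), and use surjectivity of $V_{\rm syn}$ to guarantee definedness of $V_{\rm syn}^{-1}(V_{\rm sem}(e))$ for part~(3). If anything, your part~(3) is slightly more explicit than the paper's, since you spell out that $V_{\rm sem}(e)\in D_{\rm syn}$ via the syntax-language condition before invoking surjectivity, whereas the paper leaves this implicit.
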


\begin{proof}

\bigskip

\noindent \textbf{Part 1} \sglsp Follows immediate from the definition
of $E^{\ast}$.

\bigskip

\noindent \textbf{Part 2} \sglsp Let $e \in L_{\rm syn}$ such that
$E^{\ast}(e)$ and $E(e)$ are defined.  By the Evaluation Axiom,
$V_{\rm sem}(E(e)) = V_{\rm sem}(V_{\rm syn}^{-1}(V_{\rm sem}(e))) =
V_{\rm sem}(E^{\ast}(e))$.

\bigskip

\noindent \textbf{Part 3} \sglsp Let $V_{\rm syn}$ be surjective and
$e \in L_{\rm syn}$.  Then $V_{\rm syn}^{-1}(V_{\rm sem}(e))$ is
defined and hence $E^{\ast}$ is total by its definition.
\end{proof}

\bigskip

Thus the direct evaluation function is a special evaluation function
that is defined for every syntax framework and is total if the
syntactic valuation function is surjective.

\subsection{Syntax Frameworks in an Interpreted Theory}

The notion of a syntax framework can be easily lifted from an
interpreted language to an interpreted theory.  Let a \emph{theory} be
a pair $T = (L,\Gamma)$ where $L$ is a language and $\Gamma$ is a set
of sentences in $L$ (that serve as the axioms of theory).  A
\emph{model} of $T$ is a pair $M = (D^{M}_{\rm sem},V^{M}_{\rm sem})$
such that $D^{M}_{\rm sem}$ is a set of values that includes the truth
values $\TRUE$ (true) and $\FALSE$ (false) and $V^{M}_{\rm sem} : L
\tarrow D^{M}_{\rm sem}$ is a total function such that, for all
sentences $A \in \Gamma$, $V^{M}_{\rm sem}(A) = \TRUE$.  An
\emph{interpreted theory} is then a pair $I=(T,\sM)$ where $T$ is a
theory and $\sM$ is a set of models of $T$.

A syntax framework in an interpreted theory is a syntax framework with
respect to each model of the interpreted theory:

\begin{df}[Syntax Framework in an Interpreted Theory] \label{df:syn-frame-thy} \em
\hspace{2ex}\\
Let $I=(T,\sM)$ be an interpreted theory where $T = (L,\Gamma)$ and
$L_{\rm obj}$ be a sublanguage of $L$.  A \emph{syntax framework} for
$(L_{\rm obj},I)$ is a triple $F=(L_{\rm syn}, Q, E)$ where:

\be

  \item $L_{\rm syn} \subseteq L$.

  \item $Q : L_{\rm obj} \tarrow L_{\rm syn}$ is an injective, total
    function.

  \item $E : L_{\rm syn} \tarrow L_{\rm obj}$ is a (possibly partial)
    function.

  \item For all $M = (D^{M}_{\rm sem},V^{M}_{\rm sem}) \in \sM$,
    $F^M=(D^{M}_{\rm syn},V^{M}_{\rm syn}, L_{\rm syn}, Q, E)$ is a
    syntax framework for $(L_{\rm obj},(L,D^{M}_{\rm sem},V^{M}_{\rm
      sem}))$ where $D^{M}_{\rm syn}$ is the range of $V^{M}_{\rm sem}$
    restricted to $L_{\rm syn}$ and $V^{M}_{\rm syn} = V^{M}_{\rm sem}
    \circ Q$.\hfill $\Box$

\ee 
\end{df}

\subsection{Benefits of a Syntax Framework}

The purpose of a syntax framework is to provide the means to reason
about the syntax of a designated object language.  We will briefly
examine the specific benefits that a syntax framework offers for this
purpose.

Let $I=(L,D_{\rm sem},V_{\rm sem})$ be an interpreted language,
$L_{\rm obj}$ be a sublanguage of $L$, and $F=(D_{\rm syn}, V_{\rm
  syn}, L_{\rm syn}, Q, E)$ be a syntax framework for $(L_{\rm
  obj},I)$.

The first, and most important, benefit of $F$ is that it provides a
language, $L_{\rm syn}$, for expressing statements in $L$ about the
syntactical structure of expressions in $L_{\rm obj}$.  These
statements refer to the syntax of $L_{\rm obj}$ via the syntax
representation of $F$.  For example, if $A$ is a formula in $L_{\rm
  obj}$, $e_A$ is an expression in $L_{\rm syn}$ that denotes the
representation of $A$, and $L$ is sufficiently expressive, we could
express in $L$ a statement of the form $\mname{is-implication}(e_A)$
that \emph{indirectly} says ``$A$ is an implication''.

\bsp Having quotation in $F$ enables statements about the syntax of
$L_{\rm obj}$ to be expressed directly in the metalanguage of $L$.
For example, $\mname{is-implication}(Q(A))$ would \emph{directly} say
``$A$ is an implication''.  Quotation also allows us to construct new
expressions from deconstructed components of old expressions.  For
example, if $A \ImpliesAlt B$ is a formula in $L_{\rm obj}$ and $L$ is
sufficiently expressive,
\[\mname{build-implication}(\mname{succedent}(Q(A \ImpliesAlt
B)),\mname{antecedent}(Q(A \ImpliesAlt B)))\] would denote the
representation of $B \ImpliesAlt A$. \esp

Having evaluation in $F$ enables statements about the semantics of the
expressions represented by members of $D_{\rm syn}$ to be expressed
directly in the metalanguage of $L$.  For example, if $c$ is the
expression given in the previous paragraph, then $E(c)$ would be a
formula in $L_{\rm obj}$ that asserts $B \ImpliesAlt A$.

By virtue of these basic benefits, a syntax framework is well equipped
to define and specify transformers.  As we have mentioned in the
introduction, a \emph{transformer} maps expressions to expressions.
More precisely, an \emph{$n$-ary transformer over a language $L$} maps
expressions $e_1,\ldots,e_n$ in $L$ to an expression $e$ in $L$ (where
$n \ge 0$).  A transformer can be defined by either an algorithm
(e.g., a program in a programming language) or a function (e.g., an
expression in a logic that denotes a function).  Transformers include
symbolic computation rules (like the product rule mentioned in the
Introduction), rules of inference, rewrite rules, expression
simplifiers, substitution operations, decision procedures, etc.

A transformer over a language $L$ is usually defined only in the
metalanguage of $L$ and is not defined by an expression in $L$ itself.
For example, the rules of inference for first-order logic are not
expressions in first-order logic.  A syntax framework with a
sufficiently expressive language can be used to transfer a transformer
over $L$ from the metalanguage of $L$ to $L$ itself.  To see this, let
$T : L_{\rm obj} \times \cdots \times L_{\rm obj} \tarrow L_{\rm obj}$
be an $n$-ary transformer over $L_{\rm obj}$ defined in the
metalanguage of $L$.  If $L$ is sufficiently expressive, it would be
possible to define an operator $e_T : L_{\rm syn} \times \cdots \times
L_{\rm syn} \tarrow L_{\rm syn}$ in $L$ that denotes a function $f_T :
D_{\rm syn} \times \cdots \times D_{\rm syn} \tarrow D_{\rm syn}$ that
represents $T$.  Using quotation, $e_T$ is specified by the following
statement in the metalanguage of $L$: \[\ForallApp e_1,\ldots,e_n
\mcolon L_{\rm obj} \mdot e_T(Q(e_1),\ldots,Q(e_n)) =
Q(T(e_1,\ldots,e_n)).\]

The full power of a syntax framework is exhibited in a specification
of the semantic meaning of a transformer.  Suppose $L$ is a language
of natural number arithmetic, the expressions in $L_{\rm obj}$ denote
natural numbers, $L_{\rm obj}$ contains a sublanguage $L_{\rm nat}$ of
terms denoting natural numbers, and $L_{\rm syn}$ contains a
sublanguage $L_{\rm num}$ of terms denoting natural number numerals
$Q(0),Q(1),Q(2),\ldots$.  Further suppose that $\mname{add}$ is a
binary transformer over $L_{\rm nat}$ that ``adds'' two natural number
terms so that, e.g., $\mname{add}(2,3) = 5$.  Then, using evaluation,
the semantic meaning of $e_{\sf add}$, the representation of
\mname{add} in $L$, is specified by the following statement in the
metalanguage of $L$:
\[\ForallApp e_1,e_2 \mcolon L_{\rm num} \mdot E(e_{\sf add}(e_1,e_2)) = 
E(e_1) + E(e_2)\] where $+ : L_{\rm nat} \times L_{\rm nat} \tarrow
L_{\rm nat}$ is a binary operator in $L$ that denotes the sum
function.

See \cite{Farmer13} for further discussion on how transformers can be
formalized using a syntax framework.

\subsection{Further Remarks}

\begin{rem}[Syntax Representation]\em
Although a syntax representation is a crucial component of a syntax
framework, very little restriction is placed on what a syntax
representation can be.  Almost any representation that captures the
syntactic structure of the expressions in the object language is
acceptable.  In fact, it is not necessary to capture the entire
syntactic structure of an expression, only the part of the syntactic
structure that is of interest to the developer of the syntax
framework.\hfill $\Box$
\end{rem}

\begin{rem}[Theories of Quotation]\em
The quotation function $Q$ of a syntax framework is based on the
\emph{disquotational theory of quotation}~\cite{Quotation12}.
According to this theory, a quotation of an expression $e$ is an
expression that denotes $e$ itself.  In our definition of a syntax
framework, $Q(e)$ denotes a value that represents $e$ (as a syntactic
entity).  Andrew Polonsky presents in~\cite{Polonsky11} a set of
axioms for quotation operators of this kind.  There are several other
theories of quotation that have been
proposed~\cite{Quotation12}.\hfill $\Box$
\end{rem}

\begin{rem}[Theories of Truth]\em\bsp
When $e$ is a representation of a truth-valued expression $e'$, the
evaluation $E(e)$ is a formula that asserts the truth of $e'$.  Thus
the evaluation function $E$ of a syntax framework is a \emph{truth
  predicate}~\cite{Truth13}.  A truth predicate is the face of a
\emph{theory of truth}: the properties of a truth predicate
characterize a theory of truth~\cite{Leitgeb07}.  The definition of a
syntax framework imposes no restriction on $E$ as a truth predicate
other than that the Evaluation Axiom must hold.  What truth is and how
it can be formalized is a fundamental research area of logic, and
avoiding inconsistencies derived from the liar paradox (which we
address below) and similar statements is one of the major research
issues in the area (see~\cite{Halbach11}).\hfill $\Box$\esp
\end{rem}

\begin{rem}[Contextual Syntax Frameworks]\em\label{rem:contextual}
We have mentioned already that a syntax framework cannot interpret
syntax reasoning systems that contain context-sensitive expressions.
This means that a syntax framework is not suitable for programming
languages with mutable variables.  For programming languages, a syntax
framework needs to be generalized to a \emph{contextual syntax
  framework} that includes a semantic valuation function that takes a
\emph{valuation context} as part of its input and returns a modified
valuation context as part of its output.  \emph{Metaprogramming} is
the writing of programs that manipulate other programs.  It requires a
means to manipulate the syntax of the programs in a programming
language.  In other words, metaprogramming requires code to be data.
Examples of metaprogramming languages include Lisp,
Agda~\cite{Norell07,Norell09}, F\#~\cite{FSharp11},
MetaML~\cite{TahaSheard00}, MetaOCaml~\cite{MetaOCaml11},
reFLect~\cite{GrundyEtAl06}, and Template
Haskell~\cite{SheardJones02}.  An appropriate contextual syntax
framework would provide a good basis for discussing the code
manipulation done in metaprogramming.  We will present the notion of a
contextual syntax framework in a future paper.\hfill $\Box$
\end{rem}

\section{Three Standard Examples} \label{sec:examples}

We will now present three standard syntax reasoning systems that are
examples of a syntax framework.

\subsection{Example: Expressions as Strings} \label{subsec:strings}

We will continue the development of Example~\ref{eg:strings-b}.
Suppose $L$ contains the following operators:

\bi

  \item An individual constant $c_a$ of sort \mname{Symbol} for each
    $a \in \sA$.

  \item An individual constant \mname{nil} of sort \mname{String}.

  \item A function symbol \mname{cons} of sort $\mname{Symbol} \times
    \mname{String} \tarrow \mname{String}$.

  \item A function symbol \mname{head} of sort $\mname{String} \tarrow
    \mname{Symbol}$.

  \item A function symbol \mname{tail} of sort $\mname{String} \tarrow
    \mname{String}$.

\ei
The terms of sort \mname{String} are intended to denote strings over
$\sA$ in the usual way.  \mname{cons} is used to describe the
construction of strings, while \mname{head} and \mname{tail} are used
to describe the deconstruction of strings.  The terms of sort
\mname{String} can thus be used as a language to reason
\emph{directly} about strings over $\sA$ and \emph{indirectly} about
the syntactic structure of the expressions of $L$ (including the terms
of sort \mname{String} themselves).

This reasoning system for the syntax of $L$ can be strengthened by
interconnecting the expressions of $L$ and the terms of sort
\mname{String}.  This is done by defining a quotation function $Q$ and
an evaluation function $E$.  

$Q : L \tarrow L_2$ maps each expression $e$ of $L$ to a term $Q(e)$
of sort \mname{String} such that $Q(e)$ denotes $S(e)$, the string
over $\sA$ that represents $e$.  For example, $Q$ could map $f(x)$
to \[\mname{cons}(c_{\tt f}, \mname{cons}(c_{\tt (},
\mname{cons}(c_{\tt x}, \mname{cons}(c_{\tt )}, \mname{nil})))),\]
which denotes the string \texttt{"f(x)"}.  Thus $Q$ provides the means
to refer to a representation of the syntactic structure of an
expression of $L$.

$E : L_2 \tarrow L$ maps each term $t$ of sort \mname{String} to the
expression $E(t)$ of $L$ such that the syntactic structure of $E(t)$
is represented by the string denoted by $t$ provided $t$ denotes a
string that actually represents the syntactic structure of some
expression of $L$.  For example, $E$ maps the term displayed above
(i.e., $Q(f(x))$) to $f(x)$.  Thus $E$ provides the means to refer to
the value of the expression whose syntactic structure is represented
by the string that a term of sort \mname{String} denotes.  $E$ is a
partial function on the terms of sort \mname{String} since not every
string in $D_2$ represents the syntactic structure of some expression
in $L$ and $V_2$ is surjective.  Notice that, for all expressions $e$
of $L$, $E(Q(e)) = e$ --- that is, the \emph{law of disquotation}
holds universally.

We showed previously that $I = (L, D, V)$ is an interpreted language,
$R= (D_2,S)$ is a syntax representation of $L$, and $(L_2,I)$ is a
syntax language for $R$.  $Q$ is injective since the syntactic
structure of each expression in $L$ is represented by a unique string
in $D_2$.  For $e \in L$, \[V(Q(e)) = V_2(Q(e)) = S(e),\] and thus $Q$
satisfies the Quotation Axiom if $L_{\rm obj} = L$, $D_{\rm syn} =
D_2$, $V_{\rm syn} = S$, and $L_{\rm syn} = L_2$.  For $t \in L_2$
such that $E(t)$ is defined,
\[V(E(t)) = V(V^{-1}_{\rm syn}(V_2(t))) = V(V^{-1}_{\rm syn}(V(t))),\] 
and thus $E$ satisfies the Evaluation Axiom if $L_{\rm obj} = L$,
$D_{\rm syn} = D_2$, $V_{\rm syn} = S$, and $L_{\rm syn} = L_2$.

Therefore, \[F = (D_2,S,L_2,Q,E)\] is a syntax framework for
$(L,I)$. Notice that $E$ is actually the direct evaluation function
for $F$.

\subsection{Example: G\"odel Numbering} \label{subsec:goedel}

Let $L$ be the expressions (i.e., terms and formulas) of a first-order
language of natural number of arithmetic, and let $\sA$ be the
alphabet of symbols occurring in the expressions of $L$.  Once again
the expressions of $L$ can be viewed as strings over the alphabet
$\sA$.  As Kurt G\"odel famously showed in 1931~\cite{Goedel31}, the
syntactic structure of an expression $e$ of $L$ can be represented by
a natural number called the G\"odel number of $e$.  Define $G$ to be
the total function that maps each expression of $L$ to its G\"odel
number.  $G$ is injective since each expression in $L$ has a unique
G\"odel number.  The terms of $L$, which denote natural numbers, can
thus be used to reason \emph{directly} about G\"odel numbers and
\emph{indirectly} about the syntactic structure of the expressions of
$L$.

We will show that this reasoning system based on G\"odel numbers can
be interpreted as a syntax framework.  Let $L_{\rm t}$ be the set of
terms in $L$ and $L_{\rm f}$ be the set of formulas in $L$.  Then \[I=
(L, \mathbb{N} \cup \set{\TRUE,\FALSE}, V),\] where $L = L_{\rm t}
\cup L_{\rm f}$, $\mathbb{N}$ is the set of natural numbers, and $V =
V_{\rm t} \cup V_{\rm f}$, is an interpreted language corresponding to
the language given in Example~\ref{eg:ms-fol}.

Since $G : L \tarrow \mathbb{N}$ is an injective, total function that
maps each expression in $L$ to its G\"odel number, $R =
(\mathbb{N},G)$ is a syntax representation of $L$.  Since $L_{\rm t}
\subseteq L$, $\mathbb{N} \subseteq \mathbb{N} \cup
\set{\TRUE,\FALSE}$, and $V$ restricted to $L_{\rm t}$ is $V_{\rm t} :
L_{\rm t} \tarrow \mathbb{N}$, $(L_{\rm t},I)$ is a syntax language
for $R$.

Let $Q : L \tarrow L_{\rm t}$ be a total function that maps each
expression $e \in L$ to a term $t \in L_{\rm t}$ such that $V_{\rm
  t}(t) = G(e)$.  $Q$ is injective since each expression in $L$ has a
unique G\"odel number.  For $e \in L$, \[V(Q(e)) = V_{\rm t}(Q(e)) =
G(e),\] and thus $Q$ satisfies the Quotation Axiom if $L_{\rm obj} =
L$, $D_{\rm syn} = \mathbb{N}$, $V_{\rm syn} = G$, and $L_{\rm syn} =
L_{\rm t}$.

Let $E : L_{\rm t} \tarrow L$ be the function that, for all $t \in
L_{\rm t}$, $E(t)$ is the expression in $L$ whose G\"odel number is
$V_{\rm t}(t)$ if $V_{\rm t}(t)$ is a G\"odel number of some
expression in $L$ and $E(t)$ is undefined otherwise. For $t \in L_{\rm
  t}$ such that $E(t)$ is defined, \[V(E(t)) = V(G^{-1}(V_{\rm t}(t)))
= V(G^{-1}(V(t))),\] and thus $E$ satisfies the Evaluation Axiom if
$L_{\rm obj} = L$, $D_{\rm syn} = \mathbb{N}$, $V_{\rm syn} = G$, and
$L_{\rm syn} = L_{\rm t}$.  Since not every natural number is a
G\"odel number of an expression in $L$, $G : L \tarrow \mathbb{N}$ is
not surjective and thus $E: L_{\rm t} \tarrow L$ is partial.  For an
expression $e$ of $L$, $Q(e) = t$ such that $V_{\rm t}(t) = G(e)$ by
the definition of $Q$ and then $E(t) = e$ by the the definition of
$E$.  Hence $E(Q(e)) = e$ and so the law of disquotation holds
universally.

Therefore,
\[F = (\mathbb{N},G,L_{\rm t},Q,E)\] 
is a syntax framework for $(L,I)$.  Notice that $E$ is actually the
direct evaluation function for $F$.

Define $L'_{\rm t}$ to be the sublanguage of $L_{\rm t}$ such that $t
\in L'_{\rm t}$ iff $V(t)$ is a G\"odel number of some expression in
$L$.  Then \[F' = (\mathbb{N},G,L'_{\rm t},Q,E'),\] where $E'$ is $E$
restricted to $L'_{\rm t}$, is a syntax framework for $(L,I)$ in which
the evaluation function $E'$ is total.

\subsection{Example: Expressions as Members of an Inductive Type} \label{subsec:ind-type}

In the previous two subsections we saw how strings of symbols and
G\"odel numbers can be used to represent the syntactic structure of
expressions.  These two syntax representations are very popular, but
they are not convenient for practical applications.  In this example
we will see a much more practical syntax representation in which
expressions are represented as members of an inductive type.

Let $L_{\rm prop}$ be a language of propositional logic (with logical
connectives for negation, conjunction, and disjunction). An
interpreter for the language $L_{\rm prop}$ is a program that receives
user input (which we assume is a string), parses the input into a
usable internal representation (i.e., a parse or syntax tree),
computes the value of the internal representation in the form of a new
internal representation, and then displays the new internal
representation in a user-readable form (which we again assume is a
string).  We will describe the components of such an interpreter.

Let \texttt{formula} be the type of the internal data structures
representing the propositional formulas in $L_{\rm prop}$.  This type can be
implemented as an inductive type, e.g., in F\#~\cite{FSharp14}
as:
\begin{verbatim}
type formula =
  | True
  | False
  | Var of string
  | Neg of formula
  | And of (formula * formula)
  | Or  of (formula * formula)
\end{verbatim}
Notice that the type constructors correspond precisely to the various
ways of constructing a well-formed formula in propositional logic.

The interpreter for $L_{\rm prop}$ is the composition of the following functions:

\be

  \item A function \texttt{parse} of type $\mathtt{string} \tarrow
    \mathtt{formula}$ which parses a user input string into an
    internal representation of a well-formed propositional formula ---
    or raises an error if the input does not represent one.  For the
    sake of simplicity, we assume that $L_{\rm prop}$ is chosen so
    that \texttt{parse} is injective.

  \item A function \texttt{value} of type $\mathtt{formula} \tarrow
    \mathtt{formula}$ which determines the truth value of a
    propositional formula of $L_{\rm prop}$ --- or simplifies it in cases that
    contain unknown variables. We will later see how this function
    also requires an additional input $\phi$ of a variable assignment.

  \item A function \texttt{print} of type $\mathtt{formula} \tarrow
    \mathtt{string}$ which prints an internal representation of a
    formula as a string for the user.  We assume that, for each string
    $e$ representing a well-formed propositional formula of $L_{\rm
      prop}$, $\mathtt{print}(\mathtt{parse}(e)) = e$.

\ee

\noindent
For example, suppose $e = \texttt{"p \& true"}$ is a user input string
that denotes a propositional formula in $L_{\rm prop}$.  Then $f =
\texttt{parse}(e) = \texttt{And (Var "p",True)}$ is the expression of
type \texttt{formula} that denotes its internal representation, $f' =
\texttt{value}(f) = \texttt{Var "p"}$ is the expression of type
\texttt{formula} that denotes its computed value, and $e' =
\texttt{print}(f') = \texttt{"p"}$ is the string representation of its
computed value.  Hence the interpretation of $e$
is \[\mathtt{print}(\mathtt{value}(\mathtt{parse}(e))).\] 


We will show that this system for interpreting propositional formulas
can be regarded as a syntax framework.  This example demonstrates how
to add a syntax representation and a syntax language to a language
that does not inherently support reasoning about syntax.  It also
demonstrates that any typical implementation of a formal language can
be interpreted as a syntax framework.

Let $L_{\rm prop}$ be the set of well-formed formulas of propositional
logic represented by strings as discussed above, $D_{\rm prop} =
\set{\TRUE,\FALSE}$ be the domain of truth values (i.e., the values
formulas of propositional logic denote), and $V_{\rm prop}^\phi:
L_{\rm prop} \tarrow D_{\rm prop}$ be the semantic valuation function
for propositional logic relative to a variable assignment $\phi$. Then
$I_{\rm prop} = (L_{\rm prop},D_{\rm prop},V_{\rm prop}^\phi)$ is an
interpreted language for propositional logic.

Similarly, let $L_{\rm form}$ be the set of expressions of type
\texttt{formula}, $D_{\rm form}$ be the members of the inductive type
\texttt{formula}, and $V_{\rm form}: L_{\rm form} \tarrow D_{\rm
  form}$ be the semantic valuation function for the expressions of
type \texttt{formula}. Then $I_{\rm form} = (L_{\rm form},D_{\rm
  form},V_{\rm form})$ is also an interpreted language. This secondary
interpreted language is the augmentation that we are adding to the
language of propositional logic in order to represent the syntax of
$L_{\rm prop}$.  Using functions similar to \texttt{parse},
\texttt{value}, and \texttt{print} shown above, we can implement the
language $I_{\rm prop}$ in a programming language.

Let $P: L_{\rm prop} \tarrow D_{\rm form}$ be the function such that,
for $e \in L_{\rm prop}$, $P(e)$ is the value of type
$\mathtt{formula}$ denoted by $\texttt{parse}(e)$.  Then $P$ is an
injective, total function since each $e \in L_{\rm prop}$ has exactly
one parse tree that represents the syntactic structure of $e$.
Therefore, $R = (D_{\rm form},P)$ is a syntax representation of
$L_{\rm prop}$.  The structures $I_{\rm prop}$, $I_{\rm form}$, and
$R$ are depicted in Figure~\ref{fig:impl-lang}.

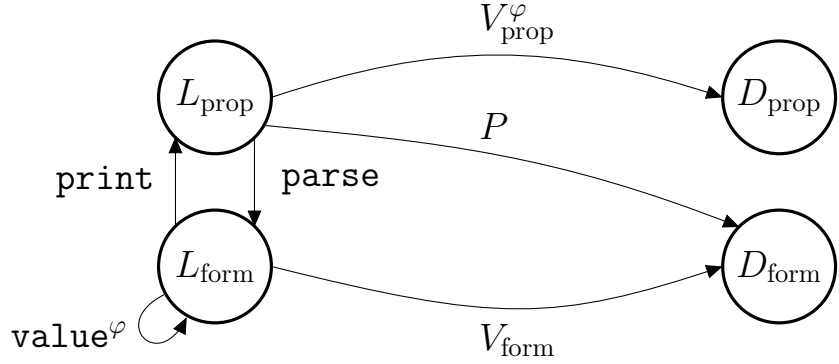
\begin{figure}
\center
\begin{tikzpicture}[scale=.75]
  \draw[very thick] (0,0) circle (1);
    \draw (0,0) node {\Large $L_{\rm prop}$};
  \draw[very thick] (0,-3) circle (1);
    \draw (0,-3) node {\Large $L_{\rm form}$};
  \draw[very thick] (10,0) circle (1);
    \draw (10,0) node {\Large $D_{\rm prop}$};
  \draw[very thick] (10,-3) circle (1);
    \draw (10,-3) node {\Large $D_{\rm form}$};
  \draw[-triangle 45] (1,-3) .. controls (5,-4) and (6,-4) .. (9,-3);
    \draw[right] (4.5,-4.3) node {\Large $V_{\rm form}$};
  \draw[-triangle 45] (1,0) .. controls (4,1) and (6,1) .. (9,0);
    \draw[right] (4.5,1.3) node {\Large $V_{\rm prop}^\phi$};
  \draw[-triangle 45] (0.9,-0.5) .. controls (4,-0.8) and (6, -1) .. (9.3,-2.3);
    \draw[right] (4.5,-0.5) node {\Large $P$};
  \draw[-triangle 45] (0.7,-0.7) -- (0.7,-2.3);
    \draw[right] (1,-1.5) node {\Large $\mathtt{parse}$};
  \draw[-triangle 45] (-0.7,-2.3) -- (-0.7,-0.7);
    \draw[right] (-3,-1.5) node {\Large $\mathtt{print}$};
  \draw[-triangle 45] (-0.9,-3.5) .. controls (-1.9,-4) and (-1,-4.9) .. (-0.5,-3.9);
    \draw[right] (-3.8,-4.2) node {\Large $\mathtt{value}^\phi$};
\end{tikzpicture}
\caption{Domains and Mappings related to $L_{\rm prop}$} \label{fig:impl-lang}
\end{figure}

Let $L=L_{\rm prop} \cup L_{\rm form}$, $D=D_{\rm prop} \cup D_{\rm
  form}$, and $V^\phi = V_{\rm prop}^\phi \cup V_{\rm form}$. $V^\phi$
is a function since the two functions $V_{\rm prop}^\phi$ and $V_{\rm
  form}$ have disjoint domains. Then $I=(L,D,V^\phi)$ is an
interpreted language and $(L_{\rm form},I)$ is a syntax language for
$R$ by construction.

The tuple \[F=(D_{\rm form},P,L_{\rm
  form},\mathtt{parse},\mathtt{print})\] is a syntax framework for
$(L_{\rm prop},I)$ since: 

\be

  \item $R=(D_{\rm form},P)$ is a syntax representation of $L_{\rm
    prop}$ as shown above.

  \item $(L_{\rm form},I)$ is syntax language for $R$ as shown above.

  \item \textbf{Quotation Axiom}: For all $e \in L_{\rm prop}$,
    $P(e)=V_{\rm form}(\mathtt{parse}(e))$ by definition, and thus
    \[V^\phi(\mathtt{parse}(e)) = V_{\rm form}(\mathtt{parse}(e)) = P(e).\]

  \item \textbf{Evaluation Axiom}: For all $e \in L_{\rm form}$,
    $P^{-1}(V_{\rm form}(e)) = \mathtt{parse}^{-1}(e) =
    \mathtt{print}(e)$ since $\mathtt{print}(\mathtt{parse}(e)) = e$,
    and thus
    \[V^\phi(\mathtt{print}(e)) = V^\phi(P^{-1}(V_{\rm form}(e))) = V^\phi(P^{-1}(V^\phi(e))).\]
\ee
\noindent
Since $\texttt{print}(\mathtt{parse}(e)) = e$ holds for all
expressions in $L_{\rm prop}$, the Law of Disquotation holds
universally.

The syntax framework for this example provides the structure that is
needed to understand the function $\mathtt{value}^\phi$ shown in
Figure~\ref{fig:impl-lang} as an implementation of the semantic
valuation function $V_{\rm prop}^\phi$. The formula that specifies
$\mathtt{value}^\phi$, \[V^\phi(e) =
V^\phi(\mathtt{print}(\mathtt{value}^\phi(\mathtt{parse}(e)))),\]
illustrates the interplay of syntax and semantics that is inherent in
its meaning.

The approach employed in this third example, in which the syntactic
values are members of an inductive type, is commonly used in
programming to represent syntax (see~\cite{FriedmanWand08}).  It
utilizes a \emph{deep embedding}~\cite{BoultonEtAl93} of the object
language $L_{\rm obj}$ into the full underlying formal language $L$.

\subsection{Further Remarks}

\begin{rem}[Variable Binding]\em
None of the standard examples discussed above treat variable binding
constructions in any special way.  There are other syntax
representation methods that identify expressions that are the same up
to a renaming of the variables that are bound by variable binders.
One method is \emph{higher-order abstract
  syntax}~\cite{Miller00,PfenningElliot88} in which the syntactic
structure of an expression with variable binders is represented by a
term in typed lambda calculus.  Another method is \emph{nominal
  techniques}~\cite{GabbayPitts02,Pitts03} in which the swapping of
variable names can be explicitly expressed.  The
paper~\cite{NanevskiPfenning05} combines quotation/evaluation
techniques with nominal techniques.\hfill $\Box$
\end{rem}

\begin{rem}[Types]\em
The languages in a syntax framework are not required to be typed.
However, it is natural that, if an expression $e$ in the object
language is of type $\alpha$, then $Q(e)$ should be of some type
$\mname{expr}(\alpha)$.  The operator $\mname{expr}$ behaves like the
necessity operator $\Box$ in modal logic~\cite{DaviesPfenning01}.  An
important design decision for such a type system is whether or not
every expression of the syntax language equals a quotation of an
expression.  In other words, should a syntax framework with a type
system admit only expressions in the syntax language that denote the
syntactic structure of well-formed expressions or should it admit in
addition expressions that denote the syntactic structure of ill-formed
expressions.  Recall that in the example of
subsection~\ref{subsec:goedel} the syntax language of $F$ contains the
latter kind of expressions, while the syntax language of $F'$ contains
only the former kind.\hfill$\Box$
\end{rem}

\section{Syntax Frameworks with Built-In Operators} \label{sec:built-in}

The three examples in the previous section illustrate how a syntax
framework provides the means to reason about the syntax of a
designated object language $L_{\rm obj} \subseteq L$.  In all three
examples, only \emph{indirect statements} about the syntax of $L_{\rm
  obj}$ can be expressed in $L$, while direct statements using $Q$ and
$E$ can be expressed in the metalanguage of $L$.  In this section we
will explore syntax frameworks in which \emph{direct statements} about
the syntax of $L_{\rm obj}$, such as $E(Q(e)) = e$, can be expressed
in $L$ itself.

\subsection{Built-in Quotation and Evaluation}

\bsp Let $I=(L,D,V)$ be an interpreted language, $L_{\rm obj}$ be a
sublanguage of $L$, and $F=(D_{\rm syn}, V_{\rm syn}, L_{\rm syn}, Q,
E)$ be a syntax framework for $(L_{\rm obj},I)$.  $F$ has
\emph{built-in quotation} if there is an operator (which we will
denote as \mname{quote}) such that, for all $e \in L_{\rm obj}$,
$Q(e)$ is the syntactic result of applying the operator to $e$ (which
we will denote as $\mname{quote}(e)$).  $F$ has \emph{built-in
  evaluation} if there is an operator (which we will denote as
\mname{eval}) such that, for all $e \in L_{\rm syn}$, $E(e)$ is the
syntactic result of applying the operator to $e$ (which we will denote
as $\mname{eval}(e)$) whenever $E(e)$ is defined.\footnote{If $L_{\rm
    obj}$ is a typed language, it may be necessary for the
  \mname{eval} operator to include a parameter that ranges over the
  types of the expressions in $L_{\rm obj}$.}  There are similar
definitions of built-in quotation and evaluation for syntax frameworks
in interpreted theories. \esp

Assume $F$ has both built-in quotation and evaluation.  Then
quotations and evaluations are expressions in $L$, and $F$ thus
provides the means to reason directly in $L$ about the interplay of
the syntax and semantics of the expressions in $L_{\rm obj}$.  In
particular, it is possible to specify in $L$ the semantic meanings of
transformers.  The following lemma shows that, since the quotations
and evaluations in $F$ begin with the operators \mname{quote} and
\mname{eval}, respectively, $E$ cannot be the direct evaluation for
$F$.  

\begin{lem}
Suppose $F$ is a syntax framework that has built-in quotation and
evaluation.  Then $E \not= E^{\ast}$.
\end{lem}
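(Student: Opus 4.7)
The plan is to produce a concrete witness $e \in L_{\rm syn}$ on which $E$ and $E^{\ast}$ disagree. Assuming $L_{\rm obj}$ is nonempty (otherwise the entire framework is vacuous), I would pick any $e_0 \in L_{\rm obj}$ and consider the syntax expression $Q(e_0) \in L_{\rm syn}$.

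First I would compute $E^{\ast}(Q(e_0))$ directly from its definition: $E^{\ast}(Q(e_0)) = V_{\rm syn}^{-1}(V_{\rm sem}(Q(e_0)))$, which by the Quotation Axiom equals $V_{\rm syn}^{-1}(V_{\rm syn}(e_0))$, and since $V_{\rm syn}$ is injective and $e_0 \in L_{\rm obj}$, this simplifies to $e_0$ (in particular, $E^{\ast}(Q(e_0))$ is defined).

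Next I would analyze $E(Q(e_0))$. By built-in quotation, $Q(e_0)$ is syntactically the expression $\mname{quote}(e_0)$. There are two cases. If $E(Q(e_0))$ is undefined, then since $E^{\ast}(Q(e_0))$ is defined, the two partial functions already differ at $Q(e_0)$ and we are done. If $E(Q(e_0))$ is defined, then built-in evaluation forces $E(Q(e_0))$ to be, as an expression of $L_{\rm obj}$, the application $\mname{eval}(Q(e_0)) = \mname{eval}(\mname{quote}(e_0))$. To conclude $E \neq E^{\ast}$ I must show this expression is syntactically distinct from $e_0$.

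The main (and only) obstacle is this syntactic distinctness step. I would argue it from the well-formedness of expressions in $L$: the expression $\mname{eval}(\mname{quote}(e_0))$ has $e_0$ occurring as a proper syntactic subexpression, inside two outer operator applications, and so is strictly larger than $e_0$ as a syntactic object. Hence $\mname{eval}(\mname{quote}(e_0)) \neq e_0$, giving $E(Q(e_0)) \neq E^{\ast}(Q(e_0))$, and therefore $E \neq E^{\ast}$. In spirit, the lemma just says that built-in evaluation records its own application in the syntax of its output, whereas direct evaluation strips it away.
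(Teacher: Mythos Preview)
Your argument is correct and follows essentially the same route as the paper: apply both $E$ and $E^{\ast}$ to $Q(e_0)$, obtain $e_0$ from $E^{\ast}$ via the Quotation Axiom, obtain $\mname{eval}(\mname{quote}(e_0))$ from $E$ via built-in quotation and evaluation, and observe that these are syntactically distinct expressions. The only cosmetic difference is that the paper argues by contradiction (assuming $E = E^{\ast}$, which makes your Case~1 unnecessary since definedness of $E(Q(e_0))$ then follows from that of $E^{\ast}(Q(e_0))$), whereas you give a direct proof with an explicit case split; your justification of the syntactic inequality via the proper-subexpression observation is a bit more explicit than the paper's one-line appeal to distinctness.
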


\begin{proof}
Suppose $E = E^{\ast}$.  Let $e \in L_{\rm obj}$.  Then
\setcounter{equation}{0}
\begin{eqnarray}
e & = & V_{\rm syn}^{-1}(V_{\rm syn}(e)) \\
  & = & V_{\rm syn}^{-1}(V_{\rm sem}(\mname{quote}(e))) \\
  & = & E^{\ast}(\mname{quote}(e)) \\
  & = & E(\mname{quote}(e)) \\
  & = & \mname{eval}(\mname{quote}(e)) 
\end{eqnarray}
(1) is by the fact that $V_{\rm syn}$ is total on $L_{\rm obj}$; (2)
is by built-in quotation and the Quotation Axiom; (3) is by the
definition of the direct evaluation function; (4) is by hypothesis;
and (5) is by the fact that $E$ is built in.  Hence $e =
\mname{eval}(\mname{quote}(e))$, which is a contradiction since these
are syntactically distinct expressions.
\end{proof}

\bigskip

The syntax framework $F$ is \emph{replete} if the object language of
$F$ is equal to the full language of $F$ (i.e., $L_{\rm obj} = L$) and
$F$ has both built-in quotation and evaluation.  A replete syntax
framework whose full language is $L$ has the facility to reason about
the syntax of all of $L$ within $L$ itself.  $F$ is \emph{weakly
  replete} if $L_{\rm syn} \subseteq L_{\rm obj}$ and $F$ has both
built-in quotation and evaluation.  There are similar definitions of
replete and weakly replete for syntax frameworks in interpreted
theories.  We will give two examples of a replete syntax framework,
one in the next subsection and one in section~\ref{sec:literature}.
We will also give another example in section~\ref{sec:literature} of a
syntax framework that is almost replete.

\begin{rem}\em
A \emph{biform
  theory}~\cite{CaretteFarmer08,Farmer07b,FarmerMohrenschildt03} is a
combination of an axiomatic theory and an algorithmic theory.  It is a
basic unit of mathematical knowledge that consists of a set of
\emph{concepts}, \emph{transformers}, and \emph{facts}.  The concepts
are symbols that denote mathematical values and, together with the
transformers, form a language $L$ for the theory.  The transformers
are programs whose input and output are expressions in $L$; they
represent syntax-based algorithms like reasoning rules.  The facts are
statements expressed in $L$ about the concepts and transformers.  A
logic with a replete syntax framework (such as Chiron discussed in
subsection~\ref{subsec:chiron}) is well-suited for formalizing biform
theories~\cite{Farmer07b}.\hfill $\Box$
\end{rem}

\subsection{Example: Lisp} \label{subsec:lisp}

We will show that the Lisp programming language with a simplified
semantics is an instance of a syntax framework with built-in quotation
and evaluation.

Choose some standard implementation of Lisp.  Let $L$ be the set of
S-expressions that do not change the Lisp valuation context when they
are evaluated by the Lisp interpreter.  Let $V : L \tarrow L \cup
\set{\bot}$ be the total function that, for all S-expressions $e \in
L$, $V(e)$ is the S-expression the interpreter returns when $e$ is
evaluated if the interpreter returns an S-expression in $L$ and $V(e)
= \bot$ otherwise.  $I = (L,L \cup \set{\bot},V)$ is thus an
interpreted language.

$R = (L,\mname{id}_L)$, where $\mname{id}_L$ is the identity function
on $L$, is a syntax representation of $L$ since each S-expression
represents its own syntactic structure.  Let $L'$ be the sublanguage
of $L$ such that, for all $e \in L$, $e \in L'$ iff $V(e) \not= \bot$.
It follows immediately by the definition of $L'$ that $(L',I)$ is a
syntax language for $R$.

Let $Q : L \tarrow L'$ be the total function that maps each $e \in L$
to the S-expression $(\texttt{quote} \; e)$.  For $e \in L$, $Q(e) \in
L'$ since $V((\texttt{quote} \; e)) = e \not= \bot$.  $Q$ is obviously
injective.  For $e \in L$, \[V(Q(e)) = V((\texttt{quote} \; e)) = e =
\mname{id}_L(e),\] and thus $Q$ satisfies the Quotation Axiom if
$L_{\rm obj} = L$, $D_{\rm syn} = L$, $V_{\rm syn} = \mname{id}_L$,
and $L_{\rm syn} = L'$.

Let $E : L' \tarrow L$ be the total function that, for all $e \in L'$,
$E(e)$ is the S-expression $(\texttt{eval} \; e)$.  For all $e \in
L'$, \[V(E(e)) = V((\texttt{eval} \; e)) = V(V(e)) =
V(\mname{id}^{-1}_L(V(e))).\] (Notice that $V(V(e))$ is always defined
since $e \in L'$.)  Thus $E$ satisfies the Evaluation Axiom if $L_{\rm
  obj} = L$, $D_{\rm syn} = L$, $V_{\rm syn} = \mname{id}_L$, and
$L_{\rm syn} = L'$.

Therefore, \[F = (L,\mname{id}_L,L',Q,E)\] is a replete syntax
framework for $(L,I)$.

Suppose $L$ were the full set of S-expressions, including the
S-expressions that modify the Lisp valuation context when they are
evaluated by the interpreter.  Then, in order to interpret Lisp as a
syntax framework, we would need to extend the notion of a \emph{syntax
  framework} to the notion of \emph{contextual syntax framework} as
mentioned in Remark~\ref{rem:contextual}

\subsection{Example: Liar Paradox} \label{subsec:liar}

The virtue of a syntax framework with built-in quotation and
evaluation is that it provides the means to express statements about
the interplay of the syntax and semantics of the expressions in
$L_{\rm obj}$ in $L$.  On the other hand, the vice of such a syntax
framework is that, if $L$ is sufficiently expressive, the liar paradox
can be expressed in $L$ using quotation and evaluation.

\bsp Let $I= (L, \mathbb{N} \cup \set{\TRUE,\FALSE}, V)$ be the
interpreted language and $F' = (\mathbb{N},G,L'_{\rm t},Q,E')$ be the
syntax framework for $(L,I)$ given in subsection~\ref{subsec:goedel}.  Assume
that $V$ is defined so that the axioms of first-order Peano arithmetic
are satisfied (see~\cite{Mendelson09}).  Assume also that $F'$ has
been modified so that it has both built-in quotation and built-in
evaluation. \esp

\bsp We claim $E'$ cannot be total.  Assume otherwise.  By the
diagonalization lemma~\cite{Carnap34}, there is an expression $A \in
L$, such that $V(A) = V(\mname{quote}(\Neg(\mname{eval}(A))))$.  Then
\esp \setcounter{equation}{0}
\begin{eqnarray}
V(\mname{eval}(A)) 
  & = & V(G^{-1}(V(A))) \\
  & = & V(G^{-1}(V(\mname{quote}(\Neg(\mname{eval}(A)))) \\
  & = & V(G^{-1}(G(\Neg(\mname{eval}(A))))) \\
  & = & V(\Neg(\mname{eval}(A)))
\end{eqnarray}
(1) is by built-in evaluation, the totality of $E'$, and the
Evaluation Axiom; (2) is by the definition of $A$; (3) is by built-in
quotation and the Quotation Axiom, and (4) is by the fact $G$ is total
on $L$.  Hence $V(\mname{eval}(A)) = V(\Neg(\mname{eval}(A)))$, which
contradicts the fact that $V$ never assigns a formula and its negation
the same truth value.  Therefore, $E'$ cannot be total and, in
particular, cannot be total on quotations.

The formula $\mname{eval}(A)$ expresses the \emph{liar paradox} and
the argument above is a proof of Alfred Tarski's 1933 theorem on the
undefinability of truth~\cite{Tarski33,Tarski35,Tarski35a}, which says
that built-in evaluation cannot serve as a truth predicate over all
formulas.  This example demonstrates why evaluation is allowed to be
partial in a syntax framework: if evaluation were required to be
total, the notion of a syntax framework would not cover reasoning
systems with built-in quotation and evaluation in which the liar
paradox can be expressed.

\subsection{Example: G\"odel Numbering with Built-In Quotation}

A syntax framework without built-in quotation and evaluation can
sometimes be modified to have built-in quotation or evaluation.

\bsp Let $I = (L, \mathbb{N} \cup \set{\TRUE,\FALSE}, V)$ be the
interpreted language and $F' = (\mathbb{N},G,L'_{\rm t},Q,E')$ be the
syntax framework for $(L,I)$ given in subsection~\ref{subsec:goedel}.
Extend $L$ to the language $L^{\ast}$ and $L'_{\rm t}$ to
$L^{\ast}_{\rm t}$ by adding a new operator $\mname{quote}$ so that
$\mname{quote}(e) \in L^{\ast}_{\rm t}$ for all $e \in L^{\ast}$.
Extend $G$ to $G^{\ast}: L^{\ast} \tarrow \mathbb{N}$ so that
$G^{\ast}(e)$ is the G\"odel number of $e$ for all $e \in L^{\ast}$.
Extend $V$ to $V^{\ast} : L^{\ast} \tarrow \mathbb{N} \cup
\set{\TRUE,\FALSE}$ so that $V^{\ast}(\mname{quote}(e)) = G^{\ast}(e)$
for all $e \in L^{\ast}$.  And, finally, define $Q^{\ast}(e)$ to be
$\mname{quote}(e)$ for all $e \in L^{\ast}$.  (We do not need to
change the definition of $E'$.)  Then $I^{\ast} = (L^{\ast},
\mathbb{N} \cup \set{\TRUE,\FALSE}, V^{\ast})$ is an interpreted
language and \[F^{\ast} = (\mathbb{N},G^{\ast},L^{\ast}_{\rm
  t},Q^{\ast},E')\] is a syntax framework for $(L^{\ast},I^{\ast})$
that has built-in quotation. \esp

See \cite{Farmer13} for further discussion on the challenges involved
in modifying a traditional logic to embody the structure of a replete
syntax framework.

\section{Quasiquotation} \label{sec:quasiquotation}

Quasiquotation is a parameterized form of quotation in which the
parameters serve as holes in a quotation that are filled with the
values of expressions.  It is a very powerful syntactic device for
specifying expressions and defining macros.  Quasiquotation was
introduced by Willard Quine in 1940 in the first version of his book
\emph{Mathematical Logic}~\cite{Quine03}.  It has been extensively
employed in the Lisp family of programming
languages~\cite{Bawden99}.\footnote{In Lisp, the standard symbol for
  quasiquotation is the backquote (\texttt{`}) symbol, and thus in
  Lisp, quasiquotation is usually called \emph{backquote}.}

\bsp We will show in this section how quasiquotation can be defined in
a syntax framework.  Let $I=(L,D,V)$ be an interpreted language, $L_{\rm
  obj}$ be a sublanguage of $L$, and $F=(D_{\rm syn}, V_{\rm syn},
L_{\rm syn}, Q, E)$ be a syntax framework for $(L_{\rm obj},I)$. \esp

\subsection{Marked Expressions} \label{subsec:marked-expr}

Suppose $e \in L$.  A \emph{subexpression} of $e$ is an occurrence in
$e$ of some $e' \in L$.  We assume that there is a set of
\emph{positions} in the syntactic structure of $e$ such that each
subexpression of $e$ is indicated by a unique position in $e$.  Two
subexpressions $e_1$ and $e_2$ of $e$ are \emph{disjoint} if $e_1$ and
$e_2$ do not share any part of the syntactic structure of $e$.

Let $e \in L_{\rm obj}$.  A \emph{marked expression} derived from $e$
is an expression of the form $e\mlist{(p_1,e_1),\ldots,(p_n,e_n)}$
where $n \ge 0$, $p_1,\ldots,p_n$ are positions of pairwise disjoint
subexpressions of $e$, and $e_1,\ldots,e_n$ are expressions in $L$.
Define $L^{\rm m}_{\rm obj}$ to be the set of marked expressions
derived from members of $L_{\rm obj}$.

\bsp Let $S: L^{\rm m}_{\rm obj} \tarrow L_{\rm obj}$ be the function
that, given a marked expression $m =
e\mlist{(p_1,e_1),\ldots,(p_n,e_n)} \in L^{\rm m}_{\rm obj}$,
simultaneously replaces each subexpression in $e$ at position $p_i$
with $E^{\ast}(e_i)$ (the application of the direct evaluation
function for $F$ to $e_i$) for all $i$ with $1 \le i \le n$.  $S(e)$
will be undefined if either $E^{\ast}(e_i)$ is undefined or
$E^{\ast}(e_i)$ does not have the same type as the subexpression at
position $p_i$ for some $i$ with $1 \le i \le n$.  \esp

\subsection{Quasiquotation}

Define $\overline{Q} : L^{\rm m}_{\rm obj} \tarrow L_{\rm syn}$ to be
the (possibly partial) function such that, if $m =
e\mlist{(p_1,e_1),\ldots,(p_n,e_n)} \in L^{\rm m}_{\rm obj}$, then
$\overline{Q}(m) = Q(S(m))$.  $\overline{Q}(m)$ is defined iff $S(m)$
is defined.  For $m \in L^{\rm m}_{\rm obj}$, $\overline{Q}(m)$ is
called the \emph{quasiquotation} of $m$.\footnote{The
  position-expression pairs $(p_i,e_i)$ in a quasiquotation
  $\overline{Q}(e\mlist{(p_1,e_1),\ldots,(p_n,e_n)})$ are sometimes
  called \emph{antiquotations}.}

$F$ has \emph{built-in quasiquotation} if there is an operator (which
we will denote as \mname{quasiquote}) such that, for all $m =
e\mlist{(p_1,e_1),\ldots,(p_n,e_n)} \in L^{\rm m}$, $\overline{Q}(m)$
is the syntactic result of applying the operator to
$e,p_1,\ldots,p_n,e_1,\ldots,e_n$ (which we will denote as
$\mname{quasiquote}(m)$).

\subsection{Backquote in Lisp} \label{subsec:backquote} 

Let us continue the example in subsection~\ref{subsec:lisp} involving
Lisp with a simplified semantics.  In Lisp, a \emph{backquote} of $L$
is an expression of the form $\texttt{`}e$ where $e$ is an
S-expression in $L$ in which some of the subexpressions of $e$ are
marked by a comma (\texttt{,}).  For example,
\[\texttt{`(+ 2 ,(+ 3 1))}\] is a backquote in which \texttt{(+ 3 1)} 
is a subexpression marked by a comma.  We will restrict our attention
to unnested backquotes.  The Lisp interpreter normally returns an
S-expression when it evaluates a backquote $\texttt{`}e \in L$.  In
this case the S-expression returned is obtained from $e$ by replacing
each subexpression $e'$ in $e$ marked by a comma with the S-expression
$V(e')$.  For example, when evaluating \texttt{`(+ 2 ,(+ 3 1))}, the
interpreter returns \mbox{\texttt{(+ 2 4)}}.  Let $L$ be extended to
$L^\ast$ to include the backquotes of $L$ and $V^\ast : L^\ast \tarrow
L^\ast \cup \set{\bot}$ be the total function such that, for all
S-expressions and backquotes $e \in L^\ast$, $V^\ast(e)$ is the
S-expression the interpreter returns when $e$ is evaluated if the
interpreter returns an S-expression and $V^\ast(e) = \bot$ otherwise.

\bsp A backquote $\texttt{`}e$ in $L^\ast$ corresponds to a marked
expression $m = e\mlist{(p_1,e_1),\ldots,(p_n,e_n)} \in L^{\rm m}$
where each $p_i$ is the position of a subexpression $\texttt{,}e_i$ in
$e$ marked by a comma for all $i$ with $1 \le i \le n$.  Let
$\texttt{`}e \in L^\ast$ be a backquote and $m =
e\mlist{(p_1,e_1),\ldots,(p_n,e_n)} \in L^{\rm m}$ be a marked
expression that corresponds to it.  We will show that the semantic
value of the backquote $\texttt{`}e$, when it is not $\bot$, is the
same as the semantic value of the quasiquotation $\overline{Q}(m)$.
Assume $V^\ast(\texttt{`}e) \not= \bot$.  Then
\setcounter{equation}{0}
\begin{eqnarray}
V^\ast(\texttt{`}e) 
  & = & S(m) \\
  & = & V(Q(S(m))) \\
  & = & V(\overline{Q}(m))
\end{eqnarray}
(1) is by the semantics of backquote and the definition of $S$ since
\[V(e_i) = \mname{id}_{L}^{-1}(V(e_i)) = V_{\rm syn}^{-1}(V(e_i)) =
E^{\ast}(e_i)\] for each $i$ with $1 \le i \le n$.  (2) is by the
Quotation Axiom and the fact that $V_{\rm syn}$ is the identity
function.  And (3) is by the definition of $\overline{Q}(m)$. \esp

\section{Examples from the Literature} \label{sec:literature}

\subsection{Example: Lambda Calculus} \label{subsec:lambda}
\newcommand{\betaarrow}{\twoheadrightarrow_{\beta}}
\newcommand{\nflambda}{{\rm NF}_{\Lambda}}
\newcommand{\nfdomain}{\nflambda \cup \{\bot\}}

In 1994 Torben Mogensen~\cite{Mogensen94} introduced a method of self
representing and interpreting terms of lambda calculus. We will
analyze this method and demonstrate how the self-interpretation of
lambda calculus is almost an instance of a replete syntax framework.

Let $\Lambda = V ~|~ \Lambda ~ \Lambda ~|~ \lambda V \mdot \Lambda$ be
the set of $\lambda$-terms where $V$ is a countable set of
variables. $\Lambda$ is the language of lambda calculus consisting of
all the $\lambda$-terms.  A $\lambda$-term is a \emph{normal form} if
$\beta$-reduction cannot be applied to it.  Given a $\lambda$-term
$M$, let the \emph{normal form of $M$}, ${\rm NF}_M$, be the normal
form that results from repeatedly applying $\beta$-reduction to $M$
until a normal form is obtained.  The normal form of $M$ is undefined
if a normal form is never obtained after repeatedly applying
$\beta$-reduction to $M$.  We will introduce two different syntax
representations of this language. The first syntax representation of
$\Lambda$ uses an inductive type similar to
subsection~\ref{subsec:ind-type} such that $V_A$ is the syntactic
valuation function where: \setcounter{equation}{0}
\begin{eqnarray}
V_A(x) & = & {\tt Var}(x)\\
V_A(M~N) & = & {\tt App}(V_A(M), V_A(N))\\
V_A(\lambda x \mdot M) & = & {\tt Abs}(\lambda x \mdot V_A(M))
\end{eqnarray}
\noindent 
Let $D_A$ be the domain of values of this inductive type.
Then $R_A = (D_A,V_A)$ is a syntax representation of $\Lambda$.

Mogensen~\cite{Mogensen94} suggests a different syntax representation
of lambda calculus. Let $\bsynbrack{\cdot} : \Lambda \tarrow
\nflambda$ be a {\em representation schema} for lambda calculus such
that: 
\setcounter{equation}{0}
\begin{eqnarray}
\bsynbrack{x} & = & \lambda a b c \mdot a ~ x\\
\bsynbrack{M~N} & = & \lambda a b c \mdot b ~ \bsynbrack{M} ~ \bsynbrack{N}\\
\bsynbrack{\lambda x \mdot M} & = & \lambda a b c \mdot c ~ (\lambda x \mdot \bsynbrack{M})
\end{eqnarray}
\noindent 
where $a,b,c$ are variables not occurring free in the $\lambda$-terms
$M$ and $N$. This representation of $\lambda$-terms is an equivalent
representation to the method described earlier which utilizes the
constructs of lambda calculus itself instead of an external data type.

Then $R_{\Lambda} = (\nflambda, \bsynbrack{\cdot})$ is a syntax
representation of $\Lambda$ and $(\nflambda, I_{\Lambda})$ is a
syntax language for $R_{\Lambda}$. Notice that, since $\bsynbrack{M}$
is in normal form for any $M \in \Lambda$, then trivially
$\bsynbrack{M} \betaarrow \bsynbrack{M}$.

Let a {\em self-interpreter} $E$ be a $\lambda$-term such that for any
$M \in {\Lambda}$, $E \bsynbrack{M}$ is $\beta$-equivalent to $M$,
i.e., $E \bsynbrack{M} =_{\beta} M$ (which means ${\rm NF}_{E
  \bsynbrack{M}}$ and ${\rm NF}_M$ are $\alpha$-convertible when these
normal forms exist).  Mogensen proves that the $\lambda$-term
\[E = Y ~ \lambda e \mdot \lambda m \mdot m ~ (\lambda x \mdot x) ~ (\lambda m n \mdot (e~m)~(e~n)) ~ (\lambda m \mdot \lambda v \mdot e(m~v)),\]
\noindent where $Y$ is the Y-combinator, is a self-interpreter.
Define $E_{\Lambda} : \nflambda \tarrow \Lambda$ to be the partial
function such that $E_{\Lambda}(M) = E~M$ if $M = \bsynbrack{N_M}$ for
some $\lambda$-term $N_M$ and is undefined otherwise.

\begin{thm}\bsp
Let $\Lambda$ be the language of lambda calculus and $I_{\Lambda} =
(\Lambda, \nfdomain, \betaarrow)$ be the interpreted language of
lambda calculus as defined earlier. Let $\bsynbrack{\cdot}$ be the
representation schema of $\Lambda$ and $E_{\Lambda}$ be the function
defined above.  Then \[F_{\Lambda} = (\nflambda, \bsynbrack{\cdot},
\nflambda, \bsynbrack{\cdot}, E_{\Lambda})\] is a syntax framework for
$(\Lambda,I_{\Lambda})$. \esp
\end{thm}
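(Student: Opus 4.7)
The plan is to verify the four clauses of Definition~\ref{df:syn-frame-lang} for $F_{\Lambda}$ in turn, treating $\lambda$-terms modulo $\alpha$-equivalence throughout so that the normal form $\betaarrow(M)$ is a well-defined element of $\nfdomain$.

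First I would check that $R_{\Lambda} = (\nflambda, \bsynbrack{\cdot})$ is a syntax representation of $\Lambda$. The set $\nflambda$ is nonempty (e.g., it contains any variable), and $\bsynbrack{\cdot}: \Lambda \tarrow \nflambda$ is total by a straightforward structural induction on $\lambda$-terms, since each clause of the representation schema yields a term whose outermost redexes are all blocked (each result is a $\lambda abc \mdot (\cdots)$ in normal form provided the recursive calls are). Injectivity of $\bsynbrack{\cdot}$ also follows by structural induction, using that the three clauses produce bodies headed by the three distinct variables $a$, $b$, $c$, which discriminates between the variable, application, and abstraction cases; within each case the inductive hypothesis handles the subterms.

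Next I would verify that $(\nflambda, I_{\Lambda})$ is a syntax language for $R_{\Lambda}$: the inclusions $\Lambda \subseteq \Lambda$, $\nflambda \subseteq \Lambda$, and $\nflambda \subseteq \nfdomain$ are immediate, and $\betaarrow$ restricted to $\nflambda$ is the identity on $\nflambda$ (hence a total function into $\nflambda = D_{\rm syn}$) because no $\beta$-reduction applies to a normal form. The Quotation Axiom is then a one-line calculation: for any $M \in \Lambda$, the term $\bsynbrack{M}$ is already in normal form, so $\betaarrow(\bsynbrack{M}) = \bsynbrack{M} = V_{\rm syn}(M)$.

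The main work is the Evaluation Axiom. Fix $M \in \nflambda$ with $E_{\Lambda}(M)$ defined; by construction this means $M = \bsynbrack{N}$ for some $N \in \Lambda$, and $E_{\Lambda}(M) = E\,M = E\,\bsynbrack{N}$. Since $M$ is already a normal form, $\betaarrow(M) = M$, so $\bsynbrack{\cdot}^{-1}(\betaarrow(M)) = N$ (using injectivity from step one), and therefore the right-hand side of the Evaluation Axiom equals $\betaarrow(N)$. Mogensen's self-interpreter theorem gives $E\,\bsynbrack{N} =_{\beta} N$, which by Church-Rosser implies that the normal forms of $E\,\bsynbrack{N}$ and $N$ either both fail to exist or are $\alpha$-convertible, so $\betaarrow(E\,\bsynbrack{N}) = \betaarrow(N)$ as elements of $\nfdomain$. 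This yields the required equation $\betaarrow(E_{\Lambda}(M)) = \betaarrow(\bsynbrack{\cdot}^{-1}(\betaarrow(M)))$.

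The subtle point, and the reason the paper says the situation is only \emph{almost} replete, is precisely this last step: the self-interpreter returns a term that is $\beta$-equal (hence, on normal forms, $\alpha$-convertible) to $N$ rather than literally identical to $N$. Any proof therefore has to commit to reading elements of $\nfdomain$ up to $\alpha$-equivalence and must invoke Church-Rosser to pass from $=_{\beta}$ to equality of normal forms in $\nfdomain$. Once this convention is fixed, all remaining verifications are routine.
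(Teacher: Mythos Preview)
Your proof is correct and follows the same four-step verification of Definition~\ref{df:syn-frame-lang} as the paper, but with considerably more care: you spell out injectivity of $\bsynbrack{\cdot}$, the identity behaviour of $\betaarrow$ on $\nflambda$, and the Church--Rosser argument needed to pass from $E\,\bsynbrack{N} =_{\beta} N$ to equality in $\nfdomain$, whereas the paper simply asserts each clause.

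One inessential correction to your closing remark: the reason $F_{\Lambda}$ is only \emph{almost} replete is not the $\alpha$-equivalence issue you identify, but rather that $\bsynbrack{\cdot}$ is a metalanguage schema and not an operator of $\Lambda$ itself, so there is no built-in quotation (whereas $E_{\Lambda}(M) = E\,M$ \emph{is} an application in $\Lambda$, giving built-in evaluation). This does not affect the correctness of your verification of the four clauses.
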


\begin{proof}
$F_{\Lambda}$ is a syntax framework since it satisfies the four
  conditions of Definition~\ref{df:syn-frame-lang}:

\be

  \item $R_{\Lambda} = (\nflambda, \bsynbrack{\cdot})$ is a syntax
    representation of $\Lambda$.

  \item $(\nflambda, I_{\Lambda})$ is syntax language for
    $R_{\Lambda}$.

  \item $\bsynbrack{\cdot} : \Lambda \tarrow \nflambda$ is an
    injective, total function such that, for all $M \in \Lambda$,
    $\bsynbrack{M} \betaarrow \bsynbrack{M}$ (Quotation Axiom).

  \item $E_{\Lambda} : \nflambda \tarrow \Lambda$ is a partial
    function such that, for all $M \in \nflambda$ with $M =
    \bsynbrack{N_M}$ for some $\lambda$-term $N_M$, $E_{\Lambda}(M) =
    E~M = E\bsynbrack{N_M} =_{\beta} N_M$ (Evaluation Axiom) since $E$
    is a self-interpreter.

\ee
\end{proof}

\bigskip

$F_{\Lambda}$ is almost replete: $\Lambda$ is both the object and full
language of $F_{\Lambda}$ and $F_{\Lambda}$ has built-in evaluation,
but $F_{\Lambda}$ does not have built-in quotation.

\subsection{Example: The Ring Tactic in Coq} \label{subsec:coq}
Coq~\cite{Coq8.4} is an interactive theorem prover based on the
calculus of inductive constructions. Let $R$ be a ring with the
associative, commutative binary operators $+$ and $*$ and the
constants $0$ and $1$ that are the identities of $+$ and $*$,
respectively. A {\em polynomial} in $R$ is an expression that consists
of the constants of $R$, the operators $+$ and $*$, and variables
$v_0, v_1, \dots$ of type $R$.

The {\em ring tactic} in Coq is a polynomial simplifier that converts
any polynomial to its equivalent {\em normal form}. The normal form of
a polynomial is defined as the ordered sum of unique monomials in
lexicographic order.

Earlier we mentioned that syntax-based operations such as
(symbolically) computing derivatives require a syntax framework to
manipulate and reason about syntax using quotation and evaluation.
Polynomial simplification is a term rewriter that uses the quotation
and evaluation mechanisms.  The {\tt ring} tactic in Coq automatically
quotes and simplifies every polynomial expression.

Internally, when the {\tt ring} tactic is applied, the polynomials are
represented by an inductive type {\tt polynomial}. The Coq
reference manual~\cite{Coq8.4} defines this type as:
\begin{verbatim}
Inductive polynomial : Type :=
  | Pvar : index -> polynomial
  | Pconst : A -> polynomial
  | Pplus : polynomial -> polynomial -> polynomial
  | Pmult : polynomial -> polynomial -> polynomial
  | Popp : polynomial -> polynomial.
\end{verbatim}
which represents polynomials similar to the inductive type example in
subsection \ref{subsec:ind-type}.

Let $L$ be the language of Coq, $D$ be the semantic domain of values
in the calculus of inductive constructions, and $V$ be the semantic
interpreter of Coq, then $I = (L,D,V)$ is the interpreted language for
Coq. Let $L_R \subseteq L$ be the language of polynomials of type $R$
(i.e., expressions in $L$ that are built with operators and constants
of $R$ and variables $v_0, v_1, \dots$ as defined earlier), $L_{\rm
  poly} \subseteq L$ be the language of expressions belonging to the
inductive type {\tt polynomial}, $D_{\rm poly} \subseteq D$ be the
image of $L_{\rm poly}$ under $V$, and $V_{\rm poly}$ be the internal
quotation mechanism of Coq the {\tt ring} tactic uses to lift
polynomial expressions in $L_R$ to expressions in $L_{\rm poly}$.
Then $(D_{\rm poly},V_{\rm poly})$ is a syntax representation and
$(L_{\rm poly},I)$ is a syntax language for this syntax representation
which is suitable for describing the \texttt{ring} tactic in Coq.

Coq's ring normalization library ({\tt Ring\_normalize.v}) also defines
an interpretation function that transforms a polynomial expression of
type {\tt polynomial} back to a ring value of type $R$:
\begin{verbatim}
Fixpoint interp_p (p:polynomial) : A :=
  match p with
  | Pconst c => c
  | Pvar i => varmap_find Azero i vm
  | Pplus p1 p2 => Aplus (interp_p p1) (interp_p p2)
  | Pmult p1 p2 => Amult (interp_p p1) (interp_p p2)
  | Popp p1 => Aopp (interp_p p1)
  end.
\end{verbatim}

To finish a definition of a syntax framework for the \texttt{ring}
tactic in Coq, we need to construct two functions $Q: L_R \to L_{\rm
  poly}$ and $E : L_{\rm poly} \to L_R$ in the metalanguage of Coq.
Their definitions are:

\be

  \item For all $e \in L_R$, $Q(e)$ is the $e' \in L_{\rm poly}$ such
    that $V(e') = V_{\rm poly}(e).$

  \item \bsp For all $e' \in L_{\rm poly}$, $E(e')$ is the $e \in L_R$ such
    that $V(e) = V(\mathtt{interp\_p})(V(e'))$. \esp

\ee
Then $F = (D_{\rm poly},V_{\rm poly},L_{\rm poly},Q,E)$ is a syntax
framework for $(L_R,I)$.

Notice that the two functions $Q$ and $E$ are not normally present in
Coq and were constructed by the machinery in Coq described above
specifically to satisfy the requirements of a syntax framework.
Although the concepts of the syntax language and the syntax
representation arose naturally from the internal mechanism for the
\texttt{ring} tactic in Coq, a syntax framework for the \texttt{ring}
tactic does \emph{not} reside in Coq as explicitly as our previous
examples.

\subsection{Example: Chiron} \label{subsec:chiron}

Chiron~\cite{Farmer07a,Farmer12}, is a derivative of
von-Neumann-Bernays-G\"odel ({\nbg}) set theory~\cite{Goedel40,
  Mendelson09} that is intended to be a practical, general-purpose
logic for mechanizing mathematics.  Unlike traditional set theories
such as Zermelo-Fraenkel ({\zf}) and {\nbg}, Chiron is equipped with a
type system, and unlike traditional logics such as first-order logic
and simple type theory, Chiron admits undefined terms.  The most
noteworthy part of Chiron is its facility for reasoning about the
syntax of expressions that includes built-in quotation and evaluation.

We will assume that the reader is familiar with the definitions
concerning Chiron in~\cite{Farmer12}.  Let $L$ be a language of
Chiron, $\sE_L$ be the set of expressions in $L$, $M$ be a standard
model for $L$, $D_M$ be the set of values in $M$, $V$ be the valuation
function in $M$, and $\phi$ be an assignment into $M$.  Then
$I=(\sE_L,D_M,V_\phi)$ is an interpreted language.

$D_M$ includes certain sets called \emph{constructions} that are
isomorphic to the syntactic structures of the expressions in $\sE_L$.
$H$ is a function in $M$ that maps each expression in $\sE_L$ to a
construction representing it.  Let $D_{\rm syn}$ be the range of $H$
and $\sT_{\rm syn}$ be the set of terms $a$ such that $V_\phi(a) \in
D_{\rm syn}$.  For $e \in \sE_L$, define $Q(e) = (\mname{quote},e)$.
For $a \in \sT_{\rm syn}$, define $E(a)$ as follows:

\be

  \item If $V_\phi(a)$ is a construction that
    represents a type and $H^{-1}(V_\phi(a))$ is eval-free, then 
    $E(a) = (\mname{eval},a,\mname{type}).$

  \item If $V_\phi(a)$ is a construction that represents a term,
    $H^{-1}(V_\phi(a))$ is eval-free, and
    $V_{\phi}(H^{-1}(V_{\phi}(a))) \not= \Undefined$, then $E(a) =
    (\mname{eval},a,\mname{C}).$

  \item If $V_\phi(a)$ is a construction that
    represents a formula and $H^{-1}(V_\phi(a))$ is eval-free, then 
    $E(a) = (\mname{eval},a,\mname{formula}).$

  \item Otherwise, $E(a)$ is undefined.

\ee

\begin{thm}
\bsp $F = (D_{\rm syn}, H, \sT_{\rm syn}, Q, E)$ is a syntax framework
for $(\sE_L,I)$.\esp
\end{thm}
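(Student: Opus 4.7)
The plan is to verify directly the four clauses of Definition~\ref{df:syn-frame-lang} for $F=(D_{\rm syn},H,\sT_{\rm syn},Q,E)$, using only the structure of Chiron recalled above and the standard semantic clauses for its $\mname{quote}$ and $\mname{eval}$ operators stated in~\cite{Farmer12}. The first two clauses are almost bookkeeping; the real work is the Quotation and Evaluation Axioms, and among those the Evaluation Axiom requires some care because of the case split in the definition of $E$.

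First, I would check that $R=(D_{\rm syn},H)$ is a syntax representation of $\sE_L$ in the sense of Definition~\ref{df:syn-rep}. The domain $D_{\rm syn}$ is nonempty because $\sE_L$ is nonempty and $H$ assigns a construction to every expression. That $H$ is total and injective, and that $H(e)$ represents the syntactic structure of $e$, are precisely the defining properties of the construction encoding in Chiron. Next I would verify that $(\sT_{\rm syn},I)$ is a syntax language for $R$: the inclusions $\sE_L\subseteq\sE_L$ and $D_{\rm syn}\subseteq D_M$ are immediate, the inclusion $\sT_{\rm syn}\subseteq\sE_L$ holds because $\sT_{\rm syn}$ is a set of terms, and the restriction of $V_\phi$ to $\sT_{\rm syn}$ is by definition of $\sT_{\rm syn}$ a total function into $D_{\rm syn}$.

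For the Quotation Axiom, I would use the semantic clause in Chiron which stipulates that $V_\phi((\mname{quote},e))=H(e)$ for every $e\in\sE_L$. This immediately gives $V_\phi(Q(e))=H(e)=V_{\rm syn}(e)$, and it simultaneously shows that $Q(e)\in\sT_{\rm syn}$ (since $H(e)\in D_{\rm syn}$) and that $Q$ is injective and total (since $e\mapsto(\mname{quote},e)$ is). For the Evaluation Axiom, I would treat each of the three defined cases in the definition of $E$ separately. In each case the Chiron semantics of $\mname{eval}$ (parameterized by the kind \mname{type}, \mname{C}, or \mname{formula}) returns exactly $V_\phi(H^{-1}(V_\phi(a)))$ whenever the argument is an eval-free construction of the appropriate kind and, in the term case, whenever the denoted term is defined. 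Combining the three cases yields $V_\phi(E(a))=V_\phi(H^{-1}(V_\phi(a)))$ whenever $E(a)$ is defined, which is the Evaluation Axiom; the fourth clause makes $E$ undefined outside these situations, which is permitted since the definition allows $E$ to be partial.

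The main obstacle will be the Evaluation Axiom, specifically matching the three syntactic kinds (type, term, formula) that Chiron's $\mname{eval}$ recognizes with the corresponding cases in the definition of $E$, and ensuring that the side conditions (eval-freeness of $H^{-1}(V_\phi(a))$, and definedness of the denoted term in the term case) are exactly the conditions under which Chiron's semantic clauses for $\mname{eval}$ produce the required value. Once these side conditions are lined up with the official semantics in~\cite{Farmer12}, the verification reduces to applying the relevant semantic clause in each case, and the theorem follows.
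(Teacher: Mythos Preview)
Your proposal is correct and follows essentially the same approach as the paper: a direct verification of the four clauses of Definition~\ref{df:syn-frame-lang}, with the Quotation Axiom reduced to the semantic clause $V_\phi((\mname{quote},e))=H(e)$ and the Evaluation Axiom handled by the case split on the three kinds (type, term, formula). The only addition in the paper is a closing sentence observing that $F$ is replete, which goes slightly beyond the theorem statement itself.
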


\begin{proof}
$F$ is a syntax framework since it satisfies the four conditions of
  the Definition~\ref{df:syn-frame-lang}:

\be

  \item $H$ maps each $e \in \sE_L$ to a construction that represents
    the syntactic structure of $e$.  Thus $D_{\rm syn}$ is a set of
    values that represent syntactic structures and $H: \sE_L \tarrow
    D_{\rm syn}$ is injective and total.  So $R$ is a syntax
    representation of $\sE_L$.

  \item $I$ is an interpreted language.  $\sE_L \subseteq \sE_L$.
    $\sT_{\rm syn} \subseteq \sE_L$.  $D_{\rm syn} \subseteq D_M$
    (since since $D_{\rm syn}$ is the range of $H$, $H \mcolon \sE_L
    \tarrow \Dv$, and $\Dv \subseteq D_M$).  And $V_\phi$ restricted
    to $\sT_{\rm syn}$ is a total function $V' : \sT_{\rm syn} \tarrow
    D_{\rm syn}$.  So $(\sT_{\rm syn},I)$ is a syntax language for
    $R$.

  \item Let $e \in \sE_L$.  Then $V_\phi(Q(e)) =
    V_\phi((\mname{quote},e)) = H(e)$ by the definition of $Q$ and the
    definition of $V_\phi$ on quotations. So $Q : \sE_L \tarrow
    \sT_{\rm syn}$ is an injective, total function such that, for all
    $e \in \sE_L$, $V_\phi(Q(e)) = H(e)$.

  \item Let $a \in \sT_{\rm syn}$ such that $E(a)$ is defined.  Hence
    $V_\phi(a)$ is a construction that represents a type, term, or
    formula.  If $V_\phi(a)$ represents a type, term, or formula, let
    $k$ be \mname{type}, \mname{C}, or \mname{formula},
    respectively. Then $V_\phi(E(a)) = V_\phi((\mname{eval},a,k)) =
    V_{\phi}(H^{-1}(V_{\phi}(a)))$ by the definition of $E$ and the
    definition of $V_\phi$ on evaluations.  So $E : \sT_{\rm syn}
    \tarrow \sE_L$ is a partial function such that, for all $a \in
    \sT_{\rm syn}$, $V_\phi(E(a)) = V_\phi(H^{-1}(V_\phi(a)))$
    whenever $E(a)$ is defined.

\ee
Finally, $F$ is replete since $\sE_L$ is both the object and full
language of $F$ and $F$ has built-in quotation and evaluation.
\end{proof}

\bigskip

Quasiquotation is a notational definition in Chiron; it is not a
built-in operator in Chiron as quotation and evaluation
are~\cite{Farmer12}.  The quasiquotation defined in Chiron is
semantically equivalent to the notion of quasiquotation defined in the
previous section.

\newpage

\section{Conclusion} \label{sec:conclusion}

We have introduced a mathematical structure called a \emph{syntax
  framework} consisting of six major components:

\be

  \item A formal language $L$ with a semantics.

  \item A sublanguage $L_{\rm obj}$ of $L$ that is the object language
    of the syntax framework.

  \item A domain $D_{\rm syn}$ of values that represent the syntactic
    structures of expressions in $L_{\rm obj}$.

  \item A sublanguage $L_{\rm syn}$ of $L$ whose expressions denote
    values in $D_{\rm syn}$.

  \item A quotation function $Q : L_{\rm obj} \tarrow L_{\rm syn}$.

  \item An evaluation function $E : L_{\rm syn} \tarrow L_{\rm obj}$.

\ee

A syntax framework provides the means to reason about the interplay of
the syntax and semantics of the expressions in $L_{\rm obj}$ using
quotation and evaluation.  In particular, it provides three basic
syntax activities:

\be

  \item Expressing statements in $L$ about the syntax of $L_{\rm
    obj}$.

  \item Constructing expressions in $L_{\rm syn}$ that denote values
    in $D_{\rm syn}$.

  \item Employing expressions in $L_{\rm syn}$ as expressions in
    $L_{\rm obj}$.

\ee
These activities can be used to specify, and even implement,
transformers that map expressions in $L_{\rm obj}$ to expressions in
$L_{\rm obj}$.  They are needed, for example, to specify the rules of
differentiation and to prove that these rules correctly produce
representations of expressions that denote
derivatives~\cite{Farmer13}.  A syntax framework also provides a basis
for defining a notion of quasiquotation which is very useful for the
second basic activity.

When a syntax framework has built-in quotation and evaluation, it
provides the means to reason \emph{directly} in $L$ about the syntax
and semantics of the expressions in $L_{\rm obj}$.  However, in this
case, the evaluation function $E$ cannot be the direct evaluation
function (Lemma~\ref{lem:direct-eval}) and, if $L$ is sufficiently
expressive, $E$ cannot be total on quotations
(subsection~\ref{subsec:liar}) and thus the Law of Disquotation
(Lemma~\ref{lem:disquotation}) cannot hold universally. 

We showed that the notion of a syntax framework embodies a common
structure found in a variety of systems for reasoning about the
interplay of syntax and semantics.  We did this by showing how several
examples of such systems can be regarded as syntax frameworks.  Three
of these examples were the standard syntax-reasoning systems based on
expressions as strings, G\"odel numbers, and members of an inductive
type.  The other, more sophisticated, examples were taken from
the literature.

We have also mentioned that a syntax framework is not adequate for
modeling syntax reasoning in programming languages with mutable
variables.  This requires a generalization of a \emph{syntax
  framework} to a \emph{contextual framework} that will be presented
in a future paper.

\section*{Acknowledgments}
\bsp
The authors are grateful to Marc Bender, Jacques Carette, Michael
Kohlhase, Russell O'Connor, and Florian Rabe for their comments about
the paper.
\esp  

\bibliography{syntax-bibliography} 
\bibliographystyle{plain}

\end{document}